\newcommand{\N}{\mathbb{N}}
\newtheorem{theorem}{Theorem}[section]
\newtheorem{lemma}[theorem]{Lemma}
\newtheorem{corollary}[theorem]{Corollary}
\newtheorem{remark}[theorem]{Remark}
\newtheorem{example}[theorem]{Example}
\newtheorem{fact}[theorem]{Fact}
\title{
Who's the GOAT? Sports Rankings and Data-Driven Random Walks on the Symmetric Group
}
\date{\today}
\author[Garcia]{Gian-Gabriel P. Garcia}
\address[Gian-Gabriel P. Garcia]{H. Milton Stewart School of Industrial and Systems Engineering, Georgia Institute of Technology}
\email{\textcolor{blue}{\href{mailto:giangarcia@gatech.edu}{giangarcia@gatech.edu}}}
\author[Mart\'inez Mori]{J. Carlos Mart\'inez Mori}
\address[J.~C. Mart\'inez Mori]{H. Milton Stewart School of Industrial and Systems Engineering, Georgia Institute of Technology}
\email{\textcolor{blue}{\href{mailto:jcmm@gatech.edu}{jcmm@gatech.edu}}}
\begin{document}

\subjclass{60J20}
\keywords{sports analytics, rank aggregation, Markov chains}

\begin{abstract}
Given a collection of historical sports rankings, can one tell which player is the greatest of all time (i.e., the GOAT)?
In this work, we design a data-driven random walk on the symmetric group to obtain a stationary distribution over player rankings, spanning across different time periods in sports history.
We combine this distribution with a notion of stochastic dominance to obtain a partial order over the players.
Compared to existing methods, our approach is distinct in that \textit{i}) using historical rankings captures the evolution of value systems and facilitates player comparisons when head-to-head data is unavailable, and \textit{ii}) aggregating into a partial order formally comes to terms with the possibility that, while some player comparisons can be established conclusively, others are ``too close to call.''
We implement our methods using publicly available data from the Association of Tennis Professionals (ATP) and the Women's Tennis Association (WTA).
Our main findings indicate that Steffi Graf and Serena Williams are the ones that come ahead as the GOATs of the WTA.
Likewise,  the ``Big Three,'' that is Novak Djokovic, Roger Federer, and Rafael Nadal, are the ones that come ahead as the GOATs of the ATP. 
As a secondary finding, we note major differences in terms of career and dominance longevity for top players across the associations.
While initially motivated by this application in sports analytics, our methods can also be applied to other practical domains where deriving rankings from historical data can inform operational decisions, such as route planning logistics.
\end{abstract}

\maketitle
    
\section{Introduction}
\label{sec: introduction}

The question of which player is the \emph{greatest of all time} (i.e., the GOAT) is one that brings about strong opinions from sports fanatics. 
Within individual sports in particular, a wide variety of individual performance criteria can be used to argue in favor or against a given player being the GOAT.
These criteria are often quantitative at face value, taking the form of individual performance statistics. 
However, these statistics need not be consistent with one another, and thus they ultimately prompt subjective value judgments on their relative importance.
To add to this, other qualitative aspects about how a specific sport develops over time might alter their significance across generations.

For instance, consider Steffi Graf and Serena Williams; two widely successful players in women's professional tennis.
Between the two, Williams leads in the number of \emph{major} titles, holding 23 compared to the 22 held by Graf.
However, Graf leads in the overall number of titles, holding 107 compared to the 73 held by Williams.
Which of these statistics\textemdash the number of titles or the number of major titles\textemdash is more important?
How does $1$ additional major title fare against $34$ additional titles?
Note that these players had little overlap in competition: Graf played professionally from 1982 to 1999, whereas Williams turned professional in 1995 and only retired recently in 2022.
How do the numerous differences in professional tennis from the 1980s to the 2020s (e.g., changes in equipment, training, court surfaces, and point systems; consolidation of prestige around certain tournaments; increased competition/access to the sport; a global pandemic that suspended multiple tournaments in 2020) relate to the significance of these and other performance statistics?

\subsection{Contributions}
The Graf-Williams example highlights an inherent difficulty with the problem of identifying the GOAT: individual performance statistics can only offer snapshots of history, and it is challenging to objectively contextualize them over time.
Fortunately, most individual sports maintain regularly-updated \emph{rankings} that reflect the moment in time in which they were curated.
In professional tennis specifically, rankings are curated on a weekly basis by both the Association of Tennis Professionals (ATP)~\cite{atp_rulebook_2024} and the Women's Tennis Association (WTA)~\cite{wta_rulebook_2024}.

In this work, we propose a mathematical framework that leverages historical ranking data to \emph{compare} players, and in particular players spanning across different time periods in sports history.
We summarize our method in Figure~\ref{fig: summary}.
In broad strokes, it is as follows:
\begin{enumerate}
    \item 
    First, we use cutoffs of historical ranking data (Figure~\ref{fig: historical}) to design a data-driven random walk on the symmetric group, from which we obtain a stationary distribution over rankings on the full set of players (Figure~\ref{fig: walk}).
    \item
    Next, we combine this distribution with a notion of stochastic dominance to obtain a partially ordered set (poset) of players (Figure~\ref{fig: poset}).
\end{enumerate}

\begin{figure}[ht]
    \centering
    \begin{subfigure}{.32\linewidth}
      \centering
      \includegraphics[width=\linewidth]{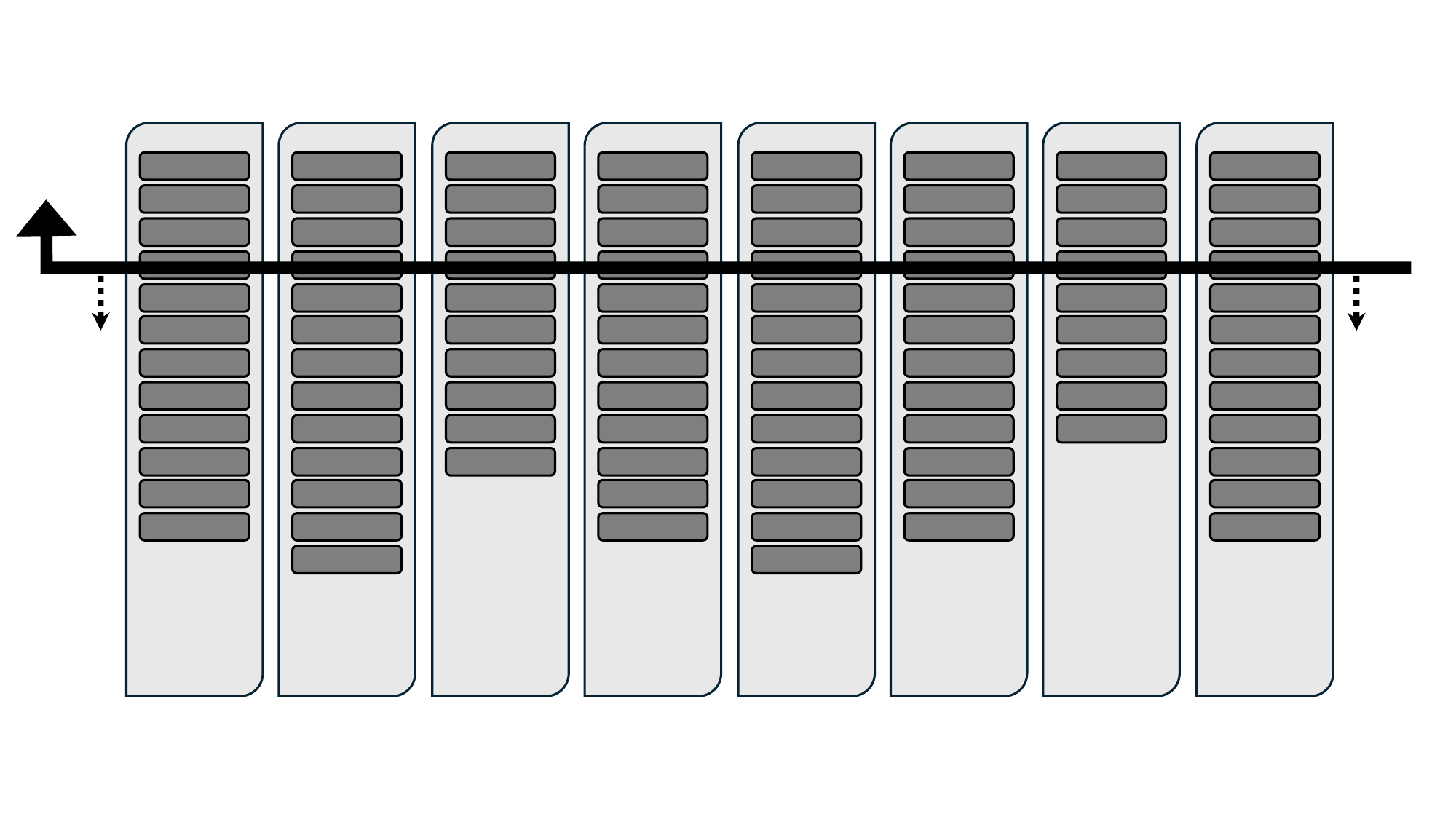}
      \caption{Historical rankings.}
      \label{fig: historical}
    \end{subfigure}
    \hfill
    \begin{subfigure}{.32\linewidth}
      \centering
      \includegraphics[width=\linewidth]{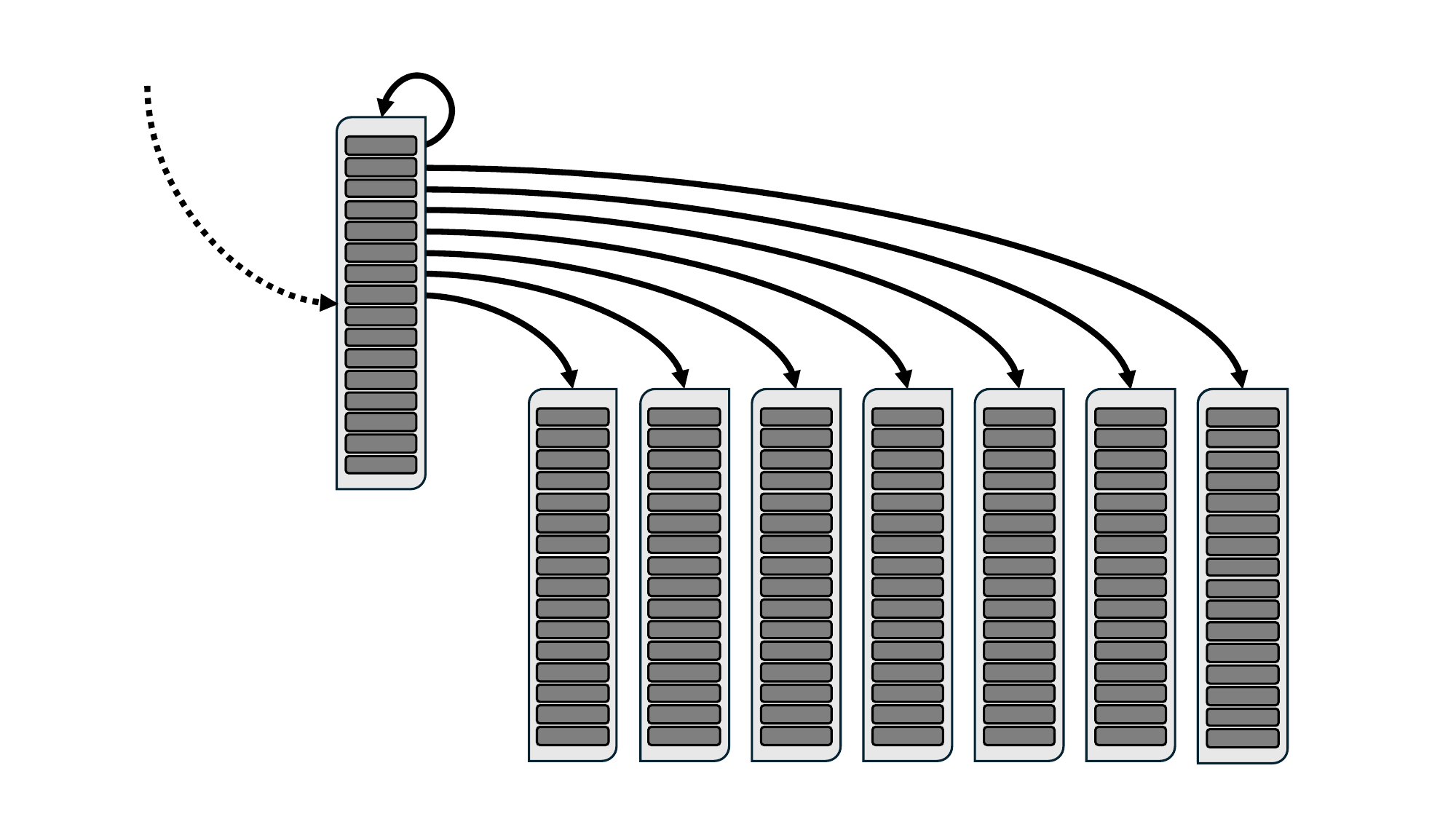}
      \caption{Data-driven random walk.}
      \label{fig: walk}
    \end{subfigure}
    \hfill
    \begin{subfigure}{.32\linewidth}
      \centering
      \includegraphics[width=\linewidth]{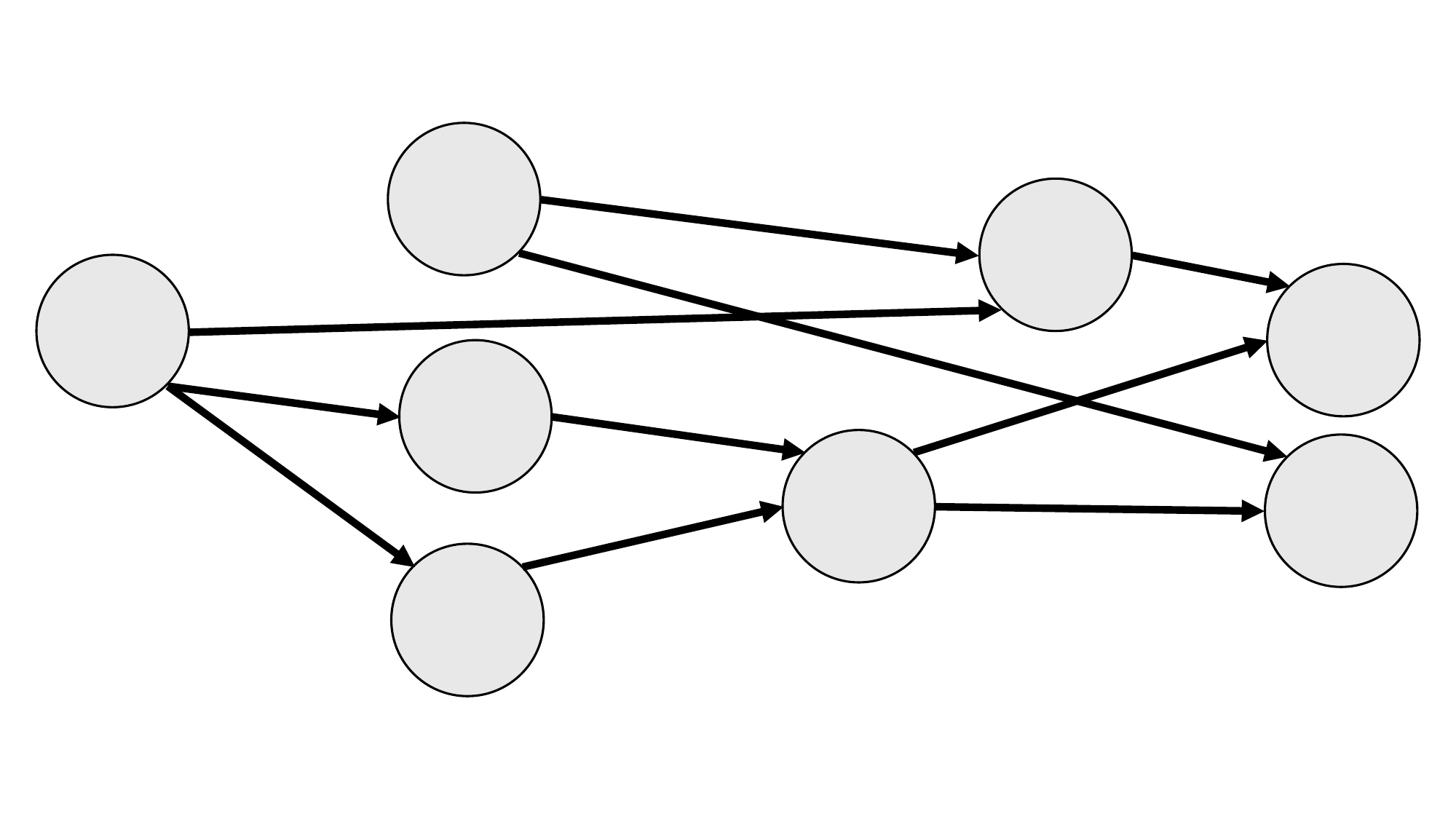}
      \caption{Poset over players.}
      \label{fig: poset}
    \end{subfigure}
    \caption{
        Outline of our method. 
        Figure~\ref{fig: historical}: The input data consists of the top portions of the full historical rankings, at varying cutoffs.
        Figure~\ref{fig: walk}: The random walk is designed to have a stationary distribution over rankings on the full set of players, reflecting patterns of relative order present in the input data.
        Figure~\ref{fig: poset}:
        The output data consists of a poset over the full set of players, based on a notion of stochastic dominance.
    }
    \label{fig: summary}
\end{figure}

This approach to player comparison represents a departure from existing methods in two principal ways.
First, compared to the use of individual performance statistics or match-level data, the use of historical rankings more directly reflects the evolution of circumstances and/or value systems.
For example, players might be strategic around tournament participation in response to the ranking system in place\textemdash what one had to do to be top-ranked in the 1980s might be different from what one had to do in the 2010s\textemdash and this might not be as evident in other datasets.
Moreover, it opens the opportunity for longitudinal comparisons in ranked systems for which match-level data is not available/does not exist (e.g., individual players within team sports, music charts, college rankings).
Second, the adoption of a notion of stochastic dominance helps formally come to terms with the possibility that there is no single GOAT, instead limiting pairwise comparisons to cases in which these can be established conclusively.

To illustrate the practical merits of our approach, we perform a numerical implementation using singles ranking data from the ``Open Era'' of the ATP and the WTA.
We experiment with a concept of GOAT that captures individual performance at the top of ranking cutoffs\textemdash performance as a ``top 3,'' ``top 5,'' ``top 10,'' or ``top 20'' player (refer to Figure~\ref{fig: historical} and Section~\ref{sec: cutoffs and the meaning of GOAT} for details)\textemdash to obtain data-driven posets and find the GOATs in the respective categories.

Tying back to the introduction, we conclude that Graf and Williams are both consistently the GOATs.
In particular, they are distinct maximal poset elements in all corresponding experiments.
We summarize these and other main findings about the GOATs in the ATP and the WTA in Tables~\ref{tab: atp} and~\ref{tab: wta}, respectively.
\begin{table}[ht]
\centering
\resizebox{\columnwidth}{!}{%
\begin{tabular}{l|l|l}
\textbf{Cutoff} & \textbf{GOATs of the ATP}                      & \textbf{Notes}                                                                                                      \\ \hline
Top 3  & Djokovic, Federer, Nadal, and Sampras & - All better than Connors and Lendl                                                                         \\ \hline
Top 5 &
  Djokovic, Federer, and Nadal &
  \begin{tabular}[c]{@{}l@{}}- Federer and Nadal better than Sampras and Connors \\ - Connors, Djokovic, Lendl, and Sampras incomparable\end{tabular} \\ \hline
Top 10 &
  Djokovic, Federer, and Nadal &
  \begin{tabular}[c]{@{}l@{}}- All better than Connors and Sampras\\ - Federer and Nadal better than Becker and Lendl \\ - Becker, Djokovic, and Lendl incomparable\end{tabular} \\ \hline
Top 20 & Federer, Nadal, and Lendl                     & \begin{tabular}[c]{@{}l@{}}- All better than Djokovic\\ - Djokovic better than Sampras \end{tabular} \\ \hline
\end{tabular}%
}
\caption{
Summary of GOATs of the ATP. 
The surnames mentioned correspond to Boris Becker, Jimmy Connors, Novak Djokovic, Roger Federer, Ivan Lendl, Rafael Nadal, and Pete Sampras (in alphabetical order).
Refer to Section~\ref{sec: rankings from the atp} for more details.
}
\label{tab: atp}
\end{table}
\begin{table}[ht]
\centering
\resizebox{\columnwidth}{!}{%
\begin{tabular}{l|l|l}
\textbf{Cutoff} &
  \textbf{GOATs of the WTA} &
  \textbf{Notes} \\ \hline
Top 3 &
  Graf and S. Williams &
  \begin{tabular}[c]{@{}l@{}}- Graf better than Hingis and Navratilova \\ - S. Williams better than Hingis \\ - S. Willams and Navratilova incomparable \end{tabular} \\ \hline
Top 5 &
  Graf and S. Williams &
  \begin{tabular}[c]{@{}l@{}}- Graf better than Navratilova \\ - Graf and S. Williams better than Seles and Hingis \\ - S. Williams and Navratilova incomparable \end{tabular} \\ \hline
Top 10 &
  Graf, Navratilova, and S. Williams &
  \begin{tabular}[c]{@{}l@{}}- Graf and Navratilova better than Seles \\ - S. Williams and Seles incomparable \\ - All better than Hingis \end{tabular} \\ \hline
Top 20 &
  \begin{tabular}[c]{@{}l@{}}Graf, Navratilova, S. Williams, \\ V. Williams, and Wozniacki\end{tabular} &
  - Numerous players close to maximal \\ \hline
\end{tabular}%
}
\caption{
Summary of GOATs of the WTA. 
The surnames mentioned correspond to Steffi Graf, Martina Hingis, Martina Navratilova, Monica Seles, Serena Williams, Venus Williams, and Caroline Wozniacki (in alphabetical order).
Refer to Section~\ref{sec: rankings from the wta} for more details.}
\label{tab: wta}
\end{table}
In addition, we observe that prominent young players such as Carlos Alcaraz in the ATP and Aryna Sabalenka in the WTA perform well as top 3 players, but that they are at a disadvantage as top 5, 10, or 20 players due to their short career lengths at the time of writing.

Lastly, we analyze the obtained posets and note qualitative differences between the rankings of the two associations.
We find that, in the ATP, being a ``good top 20 player'' is correlated with being a good top 10, 5, and 3 player (with decreasing correlation in that order).
However, in the WTA, \emph{being a ``good top 20 player'' is only mildly correlated with being a good top 10 player, and it is nearly uncorrelated with being a good top 5 or top 3 player.} 
This highlights a measurable difference in terms of career and dominance longevity for top players across the ATP and the WTA.

\subsection{Related Work}
\label{sec: related work}
Radicchi~\cite{radicchi2011best} addressed the question of who is the GOAT in men's professional tennis using historical match-level data and a network diffucion algorithm similar to \texttt{PageRank}~\cite{brin1998anatomy}.
Baker and McHale~\cite{baker2014dynamic} addressed the same question using time-varying paired comparison models and the results of matches in major tournaments.
Temesi, Sz{\'a}doczki, and Boz{\'o}ki~\cite{temesi2024incomplete} addressed the question of who is the GOAT in women's professional tennis using pairwise comparison matrices and the results of matches between players who were ever ranked first.
Still, from a philosophical point of view, sports rankings have been criticized for their inability to capture comparisons that are ``too close to call''~\cite{bordner2016all}.

In contrast to existing methods, we propose the aggregation of historical rankings into a poset as an alternative to the stringent requirements of a linear order.
That is, some pairs of players can be conclusively compared, others cannot, and there might be various, pairwise incomparable GOATs.
Thus, our conceptualization of the problem is most closely related to that of \emph{rank aggregation}~\cite{dwork2001rank,schalekamp2009rank} (refer also to Lin~\cite{lin2010rank} for a survey).

However, our methods are actually most closely related to those from the theory of card shuffling\textemdash the design and analysis of methods to sample uniform random permutations of a fixed finite set.
The set $[n] \coloneqq \{1, 2, \ldots, n\}$ corresponds to a deck of $n \in \mathbb{N} \coloneqq \{1, 2, \ldots\}$ cards, and different card shuffling procedures (e.g., riffle shuffling, shuffling by transposing two random cards, shuffling by transposing two random adjacent cards) correspond to different sampling methods.
The central questions in this area relate to convergence rates: how many times does a given shuffling procedure need to be repeated so that a given deck is sufficiently close to uniformly distributed?
Bayer and Diaconis~\cite{bayer1992trailing} showed that $(3/2) \log n + \epsilon$ riffle shuffles are necessary and sufficient for a deck of $n$ cards to start becoming well-mixed, exhibiting a cutoff phenomenon~\cite{aldous1986shuffling} (so that a standard deck of $52$ cars requires about $6$ repetitions).
Diaconis and Shahshahani~\cite{diaconis1981generating} showed that the random transposition shuffle has a cutoff at $(1/2) n \log n$, whereas Lacoin~\cite{lacoin2016mixing} showed that the random adjacent transposition shuffle has a cutoff at $(1/\pi^2) n^3 \log n$.
Refer to Diaconis and Fulman~\cite{diaconis2023mathematics} for more on card shuffling.

Our method in Section~\ref{sec: data-driven random walks on the symmetric group} is most similar to shuffling by random (not necessarily adjacent) transpositions.
However, our goal in this work is different from card shuffling in that we do not seek to sample uniform random permutations.
Rather, we seek to sample permutations from a data-driven distribution that is somehow in agreement with historical ranking data. 
Given the large amount of existing work for the case of uniform random permutations, we anticipate analytical questions around the data-driven sampling of permutations to be challenging and of independent mathematical interest.

\subsection{Organization}
\label{sec: organization}
The remainder of this work is organized as follows.
For the purpose of self-containment, in Section~\ref{sec: background} we summarize some necessary technical background.
In Section~\ref{sec: data-driven random walks on the symmetric group} we design our methods.
In Section~\ref{sec: data-driven random walks on the symmetric group} we implement our methods using data from the ATP and the WTA, and analyze the results of our numerical implementation.
Lastly, in Section~\ref{sec: conclusion} we make concluding remarks and point out other potential applications of our methods.

\section{Background}
\label{sec: background}
Readers familiar with the symmetric group may choose to skip Section~\ref{sec: symmetric group}, whereas readers familiar with Markov chains may choose to skip Section~\ref{sec: markov chains}.
In Section~\ref{sec: sports rankings and partially ordered sets} we formalize our problem input along with other necessary technical concepts and notation.

\subsection{Symmetric Group}
\label{sec: symmetric group}
Let $n \in \N \coloneqq \{1, 2, 3, \ldots\}$.
A permutation of $[n] \coloneqq \{1, 2, \ldots, n\}$ is a bijection $\pi: [n] \rightarrow [n]$.
The \emph{symmetric group} of degree $n$, denoted $\mathfrak{S}_n$, is the set of all permutations of $[n]$ forming a group under function composition.
\begin{fact}
\label{fact: n!}
For $n \geq 1$, $|\mathfrak{S}_n| = n!$
\end{fact}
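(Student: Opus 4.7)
The plan is to give a direct counting argument, which is the standard textbook proof of this fact. The key observation is that specifying a permutation $\pi \in \mathfrak{S}_n$ is equivalent to sequentially specifying the values $\pi(1), \pi(2), \ldots, \pi(n)$, subject only to the injectivity constraint (which on a finite set of equal cardinality is equivalent to the full bijection requirement). I would set up the argument by counting the number of valid choices at each step.

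First, I would note that $\pi(1)$ can be any of the $n$ elements of $[n]$. Once $\pi(1)$ has been fixed, injectivity forces $\pi(2)$ to be any of the remaining $n-1$ elements of $[n] \setminus \{\pi(1)\}$. Continuing in this manner, after $\pi(1), \ldots, \pi(k)$ have been chosen, there are exactly $n-k$ admissible values for $\pi(k+1)$. Multiplying these independent counts yields $n(n-1)(n-2)\cdots 1 = n!$ permutations. Since any such sequence of choices produces a distinct injection from $[n]$ to $[n]$, and every bijection arises in this way, the count is exact.

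If a more formal presentation is preferred, I would instead argue by induction on $n$. The base case $n=1$ is immediate since the only map $[1] \to [1]$ is the identity, giving $|\mathfrak{S}_1| = 1 = 1!$. For the inductive step, assume $|\mathfrak{S}_n| = n!$; to construct $\pi \in \mathfrak{S}_{n+1}$, first choose the value $\pi(n+1) \in [n+1]$ (there are $n+1$ options), then observe that the restriction of $\pi$ to $[n]$ is a bijection onto $[n+1] \setminus \{\pi(n+1)\}$, a set of size $n$, so by the inductive hypothesis there are $n!$ such restrictions. Hence $|\mathfrak{S}_{n+1}| = (n+1) \cdot n! = (n+1)!$.

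There is no real obstacle here; the only subtlety worth flagging is the implicit use of the pigeonhole principle to equate ``injection from $[n]$ to $[n]$'' with ``bijection from $[n]$ to $[n]$,'' which justifies why counting injections suffices. Given the elementary nature of the statement, I would expect the authors to present the one-line multiplicative argument rather than the inductive proof.
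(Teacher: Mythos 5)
Your counting argument (and the inductive variant) is correct and is the standard textbook proof; the paper states this as a Fact without proof, deferring to introductory abstract algebra texts, so your argument is exactly the one the authors implicitly rely on. Nothing to flag beyond your own well-placed note about injections on a finite set being bijections.
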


The following example illustrates the different ways in which we express permutations.
\begin{example}
\label{ex: one-line}
Let $\pi : [5] \rightarrow [5]$ be the permutation expressed explicitly as $\pi(1) = 5$, $\pi(2) = 4$, $\pi(3) = 1$, $\pi(4) = 2$, and $\pi(5) = 3$.
Equivalently, we express $\pi$ as follows:
\begin{itemize}
    \item 
    In one-line notation, and write $\pi = 54123$.
    Here, the $i$th entry of the word is $\pi(i)$.
    \item
    In cycle notation, and write $\pi = (153)(24)$.
    Here, the number following number $i$ in a cycle is $\pi(i)$.
    For example, $5$ follows $1$ in $(153)$ so that $\pi(1) = 5$.
    Similarly, $1$ follows $3$ in $(153)$ so that $\pi(3) = 1$.
    Note that function composition is performed from right to left, so that $(153) = (13)(15)$ and equivalently $\pi = (13)(15)(24)$.
\end{itemize}
\end{example}
Cycles of length $2$ are known as \emph{transpositions}, since the effect of a cycle $(i,j)$ for distinct $i, j \in [n]$ is to transpose the $i$th and $j$th entries.
By convention, a transposition $(i,j)$ is written as $(i, j)$ if $i < j$ and as $(j, i)$ if $j < i$.

\begin{fact}
\label{fact: product of 2-cycles}
For $n > 1$, every $\pi \in \mathfrak{S}_n$ is a product of at most $n-1$ transpositions.
\end{fact}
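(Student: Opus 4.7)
The plan is to prove the statement by induction on $n$, using the natural idea of ``knocking off'' the image of $n$ one transposition at a time. Alternatively (and perhaps more cleanly), one can appeal to the disjoint cycle decomposition of $\pi$; I will outline both routes and favor induction as the primary argument since it is entirely self-contained.

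For the inductive approach, the base case is $n = 2$, where $\mathfrak{S}_2 = \{e, (1,2)\}$ and the claim is immediate (with the convention that the identity is the empty product of transpositions, which has length $0 \leq n-1$). For the inductive step, assume the result for $n-1$ and let $\pi \in \mathfrak{S}_n$. If $\pi(n) = n$, then $\pi$ restricts to a permutation of $[n-1]$, so by the inductive hypothesis it is a product of at most $(n-1)-1 = n-2 \leq n-1$ transpositions. Otherwise, let $k = \pi(n) \neq n$ and consider $\sigma \coloneqq (k,n) \, \pi$. Since composition is right-to-left, $\sigma(n) = (k,n)(\pi(n)) = (k,n)(k) = n$, so $\sigma$ fixes $n$ and hence restricts to a permutation of $[n-1]$. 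By the inductive hypothesis, $\sigma$ is a product of at most $n-2$ transpositions, and therefore $\pi = (k,n)\,\sigma$ is a product of at most $n-1$ transpositions, as desired.

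The alternative (cycle-decomposition) route leverages the paper's own observation from Example~\ref{ex: one-line} that a $k$-cycle $(a_1\,a_2\,\cdots\,a_k)$ can be rewritten as the product of transpositions $(a_1\,a_k)(a_1\,a_{k-1})\cdots(a_1\,a_2)$, using exactly $k-1$ transpositions. Given the disjoint cycle decomposition of $\pi$ (including fixed points as $1$-cycles), if the cycle lengths are $k_1, \ldots, k_m$ with $\sum_i k_i = n$, then $\pi$ is a product of $\sum_i (k_i - 1) = n - m$ transpositions; since $m \geq 1$, this is at most $n-1$.

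The only subtle point\textemdash and the step I would be most careful about\textemdash is keeping the right-to-left composition convention consistent when verifying $\sigma(n) = n$ in the inductive step (or when unrolling a cycle into transpositions in the alternative proof). Beyond that, the argument is routine, and either route yields the bound $n-1$ sharply (attained, for instance, by the $n$-cycle $(1\,2\,\cdots\,n)$).
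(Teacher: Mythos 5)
Your proof is correct, but there is nothing in the paper to compare it against: the statement is presented as an unproved background \emph{Fact}, with the reader deferred to introductory abstract algebra textbooks, and the only gesture toward an argument is the decomposition $(153) = (13)(15)$ in Example~\ref{ex: one-line}. Your alternative route\textemdash disjoint cycle decomposition, writing each $k_i$-cycle as $k_i - 1$ transpositions to get the count $\sum_i (k_i - 1) = n - m \leq n - 1$\textemdash is precisely the textbook argument that example gestures at, and it buys the sharper bookkeeping $n - m$ in terms of the number $m$ of cycles. Your primary inductive route is equally valid and arguably preferable here, since it is fully self-contained: it does not presuppose existence of the disjoint cycle decomposition, and your verification $\sigma(n) = (k,n)(\pi(n)) = n$ correctly respects the right-to-left composition convention the paper fixes in Example~\ref{ex: one-line}. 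Two minor caveats: the empty-product convention for the identity is genuinely needed at $n = 2$ (the identity is not a single transposition, and $(12)(12)$ uses two), which you handle; and your closing claim that the bound $n-1$ is attained \emph{sharply} by the $n$-cycle $(1\,2\,\cdots\,n)$ is not established by either argument\textemdash you exhibit a product of $n-1$ transpositions, but minimality needs a separate observation (e.g., that multiplying by a transposition changes the number of cycles by exactly one). Since the stated fact only asserts ``at most $n-1$,'' this does not affect correctness.
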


Moreover, every transposition can be expressed as a product of \emph{adjacent transpositions}.
For ease of notation, let $s_{ij} = (i, j)$ denote a transposition and $s_i = (i, i + 1)$ for $i \in [n-1]$ denote an adjacent transposition.
Consider the following example.
\begin{example}
Note that
\begin{itemize}
    \item
    $s_{13} = (13) = (12)(23)(12) = s_1 s_2 s_1$,
    \item 
    $s_{15} = (15) = (12)(23)(34)(45)(34)(23)(12) = s_1 s_2 s_3 s_4 s_3 s_2 s_1$, and
    \item 
    $s_{24} = (24) = (23)(34)(23) = s_2 s_3 s_2$.
\end{itemize}
Therefore, 
\begin{align*}
    \pi
    &= (153)(24) \\
    &= (13)(15)(24) \\
    &= (12)(23)(12)(12)(23)(34)(45)(34)(23)(12)(23)(34)(23) \\
    &= (12)(34)(45)(34)(23)(12)(23)(34)(23) \\
    &= s_1 s_3 s_4 s_3 s_2 s_1 s_2 s_3 s_2.
\end{align*}
\end{example}

The last example illustrates the following.
\begin{fact}
\label{fact: product of adjacent 2-cycles}
For $n > 1$, every $\pi \in \mathfrak{S}_n$ is a product of at most $n(n-1)/2$ adjacent transpositions.
\end{fact}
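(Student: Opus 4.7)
The plan is to prove the bound by induction on $n$, using a ``bubble up'' strategy: first move the largest value $n$ into position $n$ via adjacent transpositions, then recurse on the remaining $n-1$ entries. This approach is essential because combining Fact~\ref{fact: product of 2-cycles} naively with the worst-case cost of $2(n-1)-1$ adjacent transpositions per transposition (as illustrated by $s_{15} = s_1 s_2 s_3 s_4 s_3 s_2 s_1$ in the preceding example) would only yield a bound of $(n-1)(2n-3)$, which is far too loose. The inductive bookkeeping instead ``reuses'' positions that have already been sorted, so their generators no longer need to be spent.

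For the base case $n=2$, $\mathfrak{S}_2 = \{\mathrm{id}, s_1\}$, and both elements are products of at most $1 = n(n-1)/2$ adjacent transpositions. For the inductive step on $\pi \in \mathfrak{S}_n$, I would let $k = \pi^{-1}(n)$ and right-multiply by $s_k s_{k+1} \cdots s_{n-1}$. Once one checks that right-multiplication by $s_i$ swaps the entries at positions $i$ and $i+1$ in one-line notation, this bubble step moves $n$ from position $k$ to position $n$ at a cost of $n-k \leq n-1$ adjacent transpositions. The resulting permutation $\pi' \coloneqq \pi \cdot s_k s_{k+1} \cdots s_{n-1}$ fixes $n$ and restricts to an element of $\mathfrak{S}_{n-1}$ expressible using only $s_1, \ldots, s_{n-2}$, so the inductive hypothesis bounds its cost by $(n-1)(n-2)/2$. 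Inverting the bubble step (each $s_i$ is an involution) recovers
\[
\pi = \pi' \cdot s_{n-1} s_{n-2} \cdots s_{k+1} s_k,
\]
a product of at most $(n-1)(n-2)/2 + (n-1) = n(n-1)/2$ adjacent transpositions.

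The main (and essentially only) obstacle is the bookkeeping around composition conventions: verifying that right-multiplication by $s_i$ acts on positions rather than values in one-line notation, so that the bubble step genuinely travels from position $k$ to position $n$ rather than scrambling values elsewhere. Once this is pinned down, the telescoping arithmetic $(n-1)(n-2)/2 + (n-1) = (n-1)n/2$ closes the induction. As a sanity check, the bound is sharp: the reverse permutation $n(n-1)\cdots 2\,1$ requires exactly $\binom{n}{2}$ adjacent transpositions, matching its number of inversions.
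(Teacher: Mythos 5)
Your proof is correct, but it takes a genuinely different route from the paper, which in fact offers no proof at all: the fact is introduced with ``the last example illustrates the following,'' i.e., by expanding each transposition $s_{ij}$ into the chain of $2(j-i)-1$ adjacent transpositions and combining with Fact~\ref{fact: product of 2-cycles}. As you correctly observe, that naive combination only yields $(n-1)(2n-3)$, and in the paper's worked example the count lands under the bound ($9 \leq 10$ for $n=5$) only thanks to ad hoc cancellations such as $(12)(12)$ dropping out\textemdash so the illustration conveys the mechanism but does not establish the stated bound. Your inductive bubble-sort argument does: moving the value $n$ from position $k = \pi^{-1}(n)$ to position $n$ via right-multiplication by $s_k s_{k+1} \cdots s_{n-1}$ costs $n-k \leq n-1$ generators, the residual permutation fixes $n$ and is handled by the inductive hypothesis in $\mathfrak{S}_{n-1}$ using only $s_1, \ldots, s_{n-2}$, and $(n-1)(n-2)/2 + (n-1) = n(n-1)/2$ closes the induction. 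Your convention check is also the right one and matches the paper's own usage (right multiplication acts on positions, e.g., $4213 \cdot s_3 = 4231$ in Section~\ref{sec: symmetric group}). Two small bonuses of your approach over the paper's: your sharpness remark via the reversal $n(n-1)\cdots 21$, which needs exactly $\binom{n}{2}$ adjacent transpositions since each adjacent transposition changes the inversion count by one, shows the bound is tight; and your bookkeeping actually proves the stronger statement that the inversion number of $\pi$ many adjacent transpositions always suffice, since each bubble pass removes precisely the inversions involving the current maximum.
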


In this work, we associate with $\mathfrak{S}_n$ two bi-directed graphs over $n!$ nodes (the number of nodes follows from Fact~\ref{fact: n!}):
\begin{enumerate}
    \item 
    The first one, depicted in Figure~\ref{fig: transpositions} for $n=4$, contains arcs $(\pi, \sigma)$ and $(\sigma, \pi)$ if and only if $\pi = \sigma \cdot s_{ij}$ for some $1 \leq i < j \leq n$ .
    For example, if $\pi = 4213 \in \mathfrak{S}_4$ and $\sigma = 4312 \in \mathfrak{S}_4$, then the graph has the arc $(\pi, \sigma)$ since $4213 \cdot s_{24} = 4312$ and $s_{24}$ is a transposition. 
    The undirected version of this bi-directed graph is known as the Bruhat graph.
    \item
    The second one, depicted in Figure~\ref{fig: adjacent transpositions} for $n=4$, contains arcs $(\pi, \sigma)$ and $(\sigma, \pi)$ if and only if $\pi = \sigma \cdot s_{i}$ for some $i \in [n - 1]$.
    For example, if $\pi = 4213 \in \mathfrak{S}_4$ and $\sigma = 4231 \in \mathfrak{S}_4$, then the graph has the edge $(\pi, \sigma)$ since $4213 \cdot s_3 = 4231$ and $s_3$ is an adjacent transposition. 
    The undirected version of this bi-directed graph is realized as the $1$-skeleton graph of the permutahedron.
\end{enumerate}
\begin{figure}[ht]
    \centering
    \begin{subfigure}{.4975\linewidth}
      \centering
      \includegraphics[width=\linewidth]{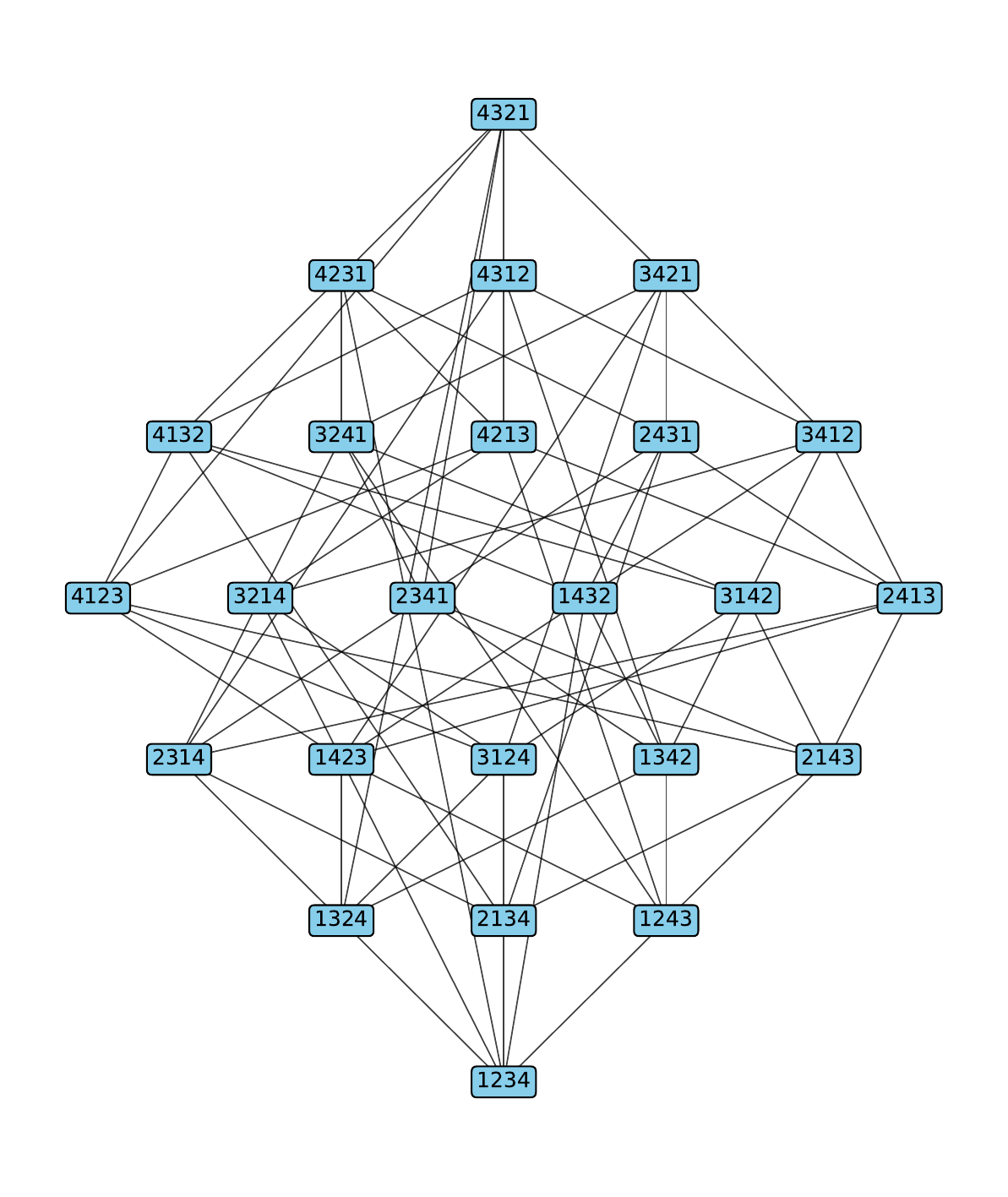}
      \caption{Generated by transpositions.}
      \label{fig: transpositions}
    \end{subfigure}%
    \begin{subfigure}{.4975\linewidth}
      \centering
      \includegraphics[width=\linewidth]{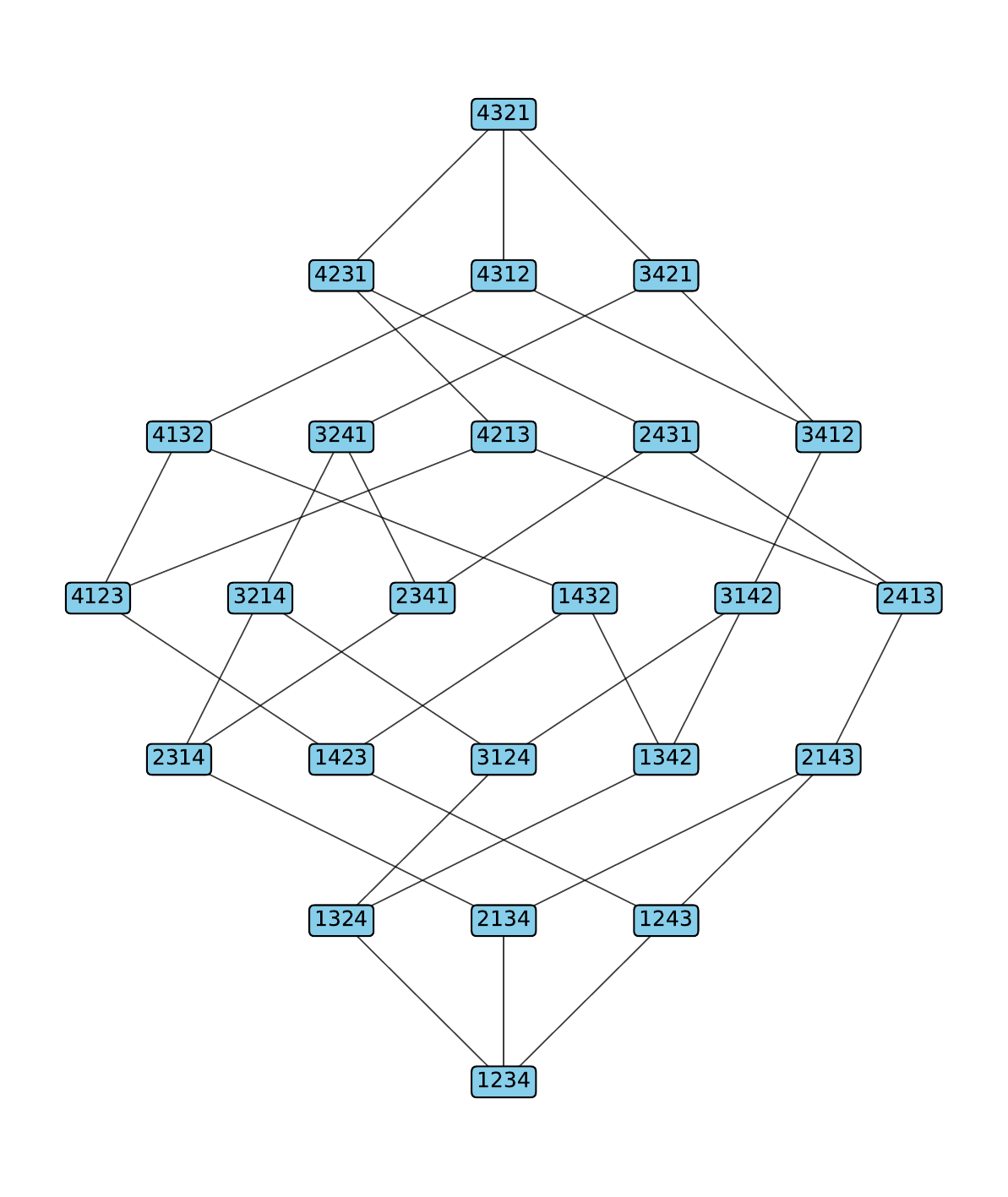}
      \caption{Generated by adjacent transpositions.}
      \label{fig: adjacent transpositions}
    \end{subfigure}
    \caption{
        Bi-directed graphs on $\mathfrak{S}_4$, with nodes as permutations in one-line notation.
        (Bi-directed arcs are depicted as undirected edges.)
        }
    \label{fig: graphs}
\end{figure}

The bi-directed graph in Figure~\ref{fig: adjacent transpositions} illustrates the commonly-used fact that adjacent transpositions suffice to generate $\mathfrak{S}_n$.
However, as we explain in Section~\ref{sec: data-driven random walks on the symmetric group}, the more edge-dense class of graphs illustrated in Figure~\ref{fig: transpositions} is better suited for the purposes of this work.
The main properties of this class of graphs that we leverage in this work follow from Fact~\ref{fact: product of 2-cycles} and are summarized below.
\begin{remark}
\label{remark: g_n}
For $n > 1$, the bi-directed version of the Bruhat graph on $\mathfrak{S}_n$ is strongly-connected and has diameter $n-1$.
\end{remark}
Refer to any introductory abstract algebra textbook for more on the symmetric group.

\subsection{Markov Chains}
\label{sec: markov chains}

This section is based on Levin and Peres~\cite{levin2017markov}.
A discrete-time Markov chain is a sequence $X_1, X_2, X_3, \ldots$ of random variables with the Markov property.
That is,
\begin{equation*}
    \Pr[X_{t+1} = x | X_1 = x_1, X_2 = x_2, \ldots, X_t = x_t] = \Pr[X_{t+1} = x | X_t = x_t] 
\end{equation*}
for all $t \in \mathbb{N}$ whenever the conditional probabilities are well-defined.
If the random variables share a finite co-domain, then the chain has a finite state space. 
Say, it is indexed by $[m]$ for some $m \in \mathbb{N}$.
The chain is time-homogeneous if
\begin{equation*}
    \Pr[X_{t+1} = x | X_t = y] = \Pr[X_t = x | X_{t-1} = y] 
\end{equation*}
for all $t \in \mathbb{N}$, so that the transition probabilities are independent of $t$.

In the finite and time-homogeneous case, the transition probabilities can be represented by a row-stochastic matrix $P \in \mathbb{R}^{m \times m}$ where
\begin{equation*}
    P_{ij} = \Pr[X_{t+1} = j | X_t = i]
\end{equation*}
for all $i, j \in [m]$.
In effect, $P$ encodes the transition probabilities of a random walk on a directed graph $G_m = ([m], A)$, where $(i,j) \in A$ if and only if $P_{ij} > 0$.
Specifically, $X_t \in [m]$ corresponds to the node visited during the $t$th step and, conditioned on $X_t = i$, the $i$th row of $P$ governs the random transition to the $(t+1)$th step.

The chain $P$ is \emph{irreducible} if, for any $i, j \in [m]$, there exists some $\tau \in \mathbb{N}$ such that $P_{ij}^\tau > 0$.
For any $i \in [m]$, its \emph{period} is defined as the greatest common divisor of $\{t \in \mathbb{N}: P_{ii}^t > 0\}$.
The chain is \emph{aperiodic} if $\text{gcd}\left(\{t \in \mathbb{N}: P_{ii}^t > 0\}\right) = 1$ for all $i \in [m]$. 
Note that $P_{ii} > 0$ for all $i \in [m]$ is a sufficient condition for aperiodicity.

The following are classical results in the theory of Markov chains.
\begin{theorem}
\label{theorem: stationary}
If $P$ is an irreducible finite Markov chain, then there exists a unique probability distribution $\mu \in \mathbb{R}^m$ over $[m]$ with $\mu = \mu P$.
The distribution $\mu$ is referred to as the \emph{stationary distribution} of $P$. 
\end{theorem}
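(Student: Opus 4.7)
The plan is to prove existence via a compactness / fixed-point argument and uniqueness by leveraging irreducibility to rule out two distinct invariant distributions.

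For existence, I would first observe that the probability simplex $\Delta = \{\mu \in \mathbb{R}^{m}_{\geq 0} : \sum_{i \in [m]} \mu_i = 1\}$ is nonempty, compact, and convex, and that the linear map $\mu \mapsto \mu P$ sends $\Delta$ into itself since $P$ is row-stochastic (it preserves non-negativity and total mass). Brouwer's fixed-point theorem then furnishes some $\mu \in \Delta$ with $\mu P = \mu$. A constructive alternative I would record, in case one prefers to avoid Brouwer, is to fix any $\nu \in \Delta$ and extract a subsequential limit of the Ces\`aro averages $\mu_T = \tfrac{1}{T}\sum_{t=0}^{T-1}\nu P^t$; the telescoping identity $\mu_T P - \mu_T = (\nu P^T - \nu)/T$ and the $\ell^1$-boundedness of the right-hand side guarantee that any limit $\mu$ satisfies $\mu P = \mu$.

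For uniqueness I would proceed in two steps. First, I would establish that every stationary $\mu$ has full support: if $\mu(j) = 0$ for some $j$, the identity $0 = \mu(j) = \sum_i \mu(i) P^\tau_{ij}$ combined with irreducibility (which, for each $i$, provides some $\tau$ with $P^\tau_{ij} > 0$) forces $\mu(i) = 0$ for all $i$, contradicting $\sum_i \mu_i = 1$. Second, given two stationary distributions $\mu, \mu'$, I would set $\eta = \mu - \mu'$ and decompose $\eta = \eta^+ - \eta^-$ into positive and negative parts. Coordinate-wise inspection of $\eta P = \eta$, using row-stochasticity, yields $\eta^+ P \geq \eta^+$ and $\eta^- P \geq \eta^-$ entrywise; since the total masses are preserved under $P$, both inequalities are in fact equalities. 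If $\eta \neq 0$, normalizing $\eta^+$ and $\eta^-$ would then produce two stationary distributions with disjoint supports, contradicting full support. Hence $\mu = \mu'$.

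The main obstacle I anticipate is translating the combinatorial notion of irreducibility into the algebraic statement that the left eigenspace of $P$ at eigenvalue $1$ is one-dimensional; the signed-measure splitting above is the cleanest elementary route I know. An expedient alternative is to invoke the Perron--Frobenius theorem for irreducible non-negative matrices as a black box: since $P$ is stochastic, its spectral radius equals $1$, and Perron--Frobenius immediately yields that $1$ is a simple eigenvalue admitting a unique (up to positive scaling) strictly positive left eigenvector, which after normalization is the desired stationary distribution.
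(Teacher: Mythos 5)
Your proposal is correct, and every step checks out: the Brouwer (or Ces\`aro) argument for existence is sound, the full-support lemma correctly uses that stationarity gives $\mu = \mu P^\tau$ for every $\tau$ (with $\tau$ chosen per state $i$), and the positive/negative-part splitting of $\eta = \mu - \mu'$ does yield $\eta^{+}P \geq \eta^{+}$ and $\eta^{-}P \geq \eta^{-}$ entrywise, which mass conservation upgrades to equalities, producing two stationary distributions with disjoint supports and hence a contradiction. Note, however, that the paper itself offers no proof to compare against: it states this as a classical result and defers to Levin and Peres, so the relevant comparison is with the proof in that cited reference, which proceeds quite differently. There, existence is established probabilistically by fixing a state $z$ and showing that $\tilde{\mu}(y) = \mathbb{E}_z\bigl[\#\{\text{visits to } y \text{ before returning to } z\}\bigr]$, suitably normalized, is stationary; uniqueness follows from the lemma that any harmonic function ($h$ with $Ph = h$) of an irreducible chain is constant, so the eigenspace at $1$ is one-dimensional by a rank argument. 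The probabilistic route buys more than uniqueness: it yields the identity $\mu(y) = 1/\mathbb{E}_y[\tau_y^{+}]$ relating stationary mass to expected return times, which is foundational for the mixing-time theory the paper leans on in its Convergence Theorem. Your route is purely linear-algebraic and topological, arguably shorter and more self-contained, and your closing observation is apt: the signed-measure splitting is exactly the elementary substitute for Perron--Frobenius simplicity of the eigenvalue $1$, and either it or the black-box Perron--Frobenius invocation would serve equally well here.
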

\begin{theorem}[Convergence Theorem]
\label{theorem: convergence}
If $P$ is an irreducible and aperiodic finite Markov chain with stationary distribution $\mu$, then there exist constants $\alpha \in (0, 1)$ and $C > 0$ such that
\begin{equation*}
    \max_{i \in [m]} \lVert {(P^t)}_i - \mu \rVert_{\mathrm{TV}} \leq C\alpha^t,
\end{equation*}
where ${(P^t)}_i$ is the $i$th row of $P^t$ and $\lVert \cdot \rVert_{\mathrm{TV}}$ is the total variation distance.
\end{theorem}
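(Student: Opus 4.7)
The plan is to prove Theorem~\ref{theorem: convergence} via Doeblin's minorization approach, reducing the convergence of $P^t$ to $\mu$ in total variation to iterated contraction under a fixed power $P^{t_0}$ of the transition matrix.

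First, I would establish that there exists $t_0 \in \N$ such that $P^{t_0}_{ij} > 0$ for every pair $i, j \in [m]$. Aperiodicity enters here: for each state $i$, the set $T_i \coloneqq \{t \in \N : P^t_{ii} > 0\}$ is closed under addition (by concatenating return loops) and has $\gcd(T_i) = 1$, so by a classical number-theoretic lemma (any additively closed subset of $\N$ with gcd $1$ contains all sufficiently large integers) there is some $N_i$ with $t \in T_i$ for every $t \geq N_i$. Irreducibility yields, for each pair $(i, j)$, an exponent $k_{ij}$ with $P^{k_{ij}}_{ij} > 0$, and combining these via $P^{N_i + k_{ij}}_{ij} \geq P^{N_i}_{ii} \cdot P^{k_{ij}}_{ij} > 0$ shows that a single sufficiently large $t_0$ works uniformly across all pairs.

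Second, I would set $\delta \coloneqq \min_{i,j \in [m]} P^{t_0}_{ij} > 0$ and prove the Doeblin contraction inequality: for any probability vectors $\nu_1, \nu_2$ on $[m]$,
\[
\lVert \nu_1 P^{t_0} - \nu_2 P^{t_0} \rVert_{\mathrm{TV}} \leq (1 - m\delta) \, \lVert \nu_1 - \nu_2 \rVert_{\mathrm{TV}}.
\]
The cleanest route is to write $P^{t_0} = m\delta \, \mathbf{1} u^{\top} + (1 - m\delta) \, Q$, where $u$ is the uniform distribution on $[m]$ (so every row of $\mathbf{1} u^{\top}$ equals $u^{\top}$) and $Q$ is a suitably defined row-stochastic residual whose non-negativity is guaranteed by the choice of $\delta$. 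The first component sends every starting distribution to $u$ and therefore kills any signed difference, while the second is non-expansive on total variation because it is row-stochastic. Adding these up yields the stated contraction factor.

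Third, applying the inequality with $\nu_1 = e_i$ and $\nu_2 = \mu$ (so that $\mu P^{t_0} = \mu$ by Theorem~\ref{theorem: stationary}) and iterating $k$ times gives
\[
\lVert (P^{kt_0})_i - \mu \rVert_{\mathrm{TV}} \leq (1 - m\delta)^k \, \lVert e_i - \mu \rVert_{\mathrm{TV}} \leq (1 - m\delta)^k.
\]
Setting $\alpha \coloneqq (1 - m\delta)^{1/t_0} \in (0, 1)$ and choosing $C$ large enough to absorb the residual factor $(1-m\delta)^{-1}$ for $t$ not divisible by $t_0$ yields the claimed bound. The main obstacle is the first step: synthesizing irreducibility and aperiodicity to extract a uniform exponent $t_0$ at which every entry of $P^{t_0}$ is strictly positive. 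Without aperiodicity one could only hope to obtain positivity along subsequences, so the number-theoretic structure of the return-time sets $T_i$ is genuinely essential; the subsequent minorization and iteration steps are then largely mechanical consequences of the resulting Doeblin condition.
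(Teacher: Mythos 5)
Your proof is correct, but note that the paper does not prove this statement at all: it is quoted as a classical result from the cited reference (Levin and Peres), and your argument is essentially the standard proof given there (their Theorem 4.9). The only cosmetic difference is that you minorize $P^{t_0}$ by the uniform distribution via $\delta = \min_{i,j} P^{t_0}_{ij}$, writing $P^{t_0} = m\delta\,\mathbf{1}u^{\top} + (1-m\delta)Q$, whereas the textbook minorizes by the stationary distribution itself ($P^{t_0}_{ij} \geq \theta \mu_j$); both yield the same geometric contraction, and yours is marginally more self-contained since it needs no property of $\mu$ beyond $\mu P = \mu$. The one degenerate case worth a sentence in a full write-up is $m\delta = 1$, where $P^{t_0}$ is exactly the uniform matrix, $Q$ is undefined, and the conclusion holds trivially since the total variation distance vanishes for all $t \geq t_0$; likewise, for $t$ not a multiple of $t_0$ you should invoke the non-expansiveness of the remaining factor $P^{s}$ (row-stochasticity again) rather than only inflating $C$, though your choice $C = (1-m\delta)^{-1}$ does suffice once that is observed.
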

Theorem~\ref{theorem: convergence} gives analytical grounds for the repeated application of an irreducible and aperiodic finite Markov chain to sample from its stationary distribution.

\subsection{Sports Rankings and Partially Ordered Sets}
\label{sec: sports rankings and partially ordered sets}

A finite \emph{partially ordered set} (poset) is a pair $\pi = ([n], \preceq_{\pi})$ consisting of a ground set $[n]$ and a relation $\preceq_{\pi}$ satisfying:
\begin{enumerate}
    \item \emph{Reflexivity:} For any $i \in [n]$ we have that $i \preceq_{\pi} i$.
    \item \emph{Antisymmetry:} For any $i, j \in [n]$ we have that if $i \preceq_{\pi} j$ and $j \preceq_{\pi} i$, then $i = j$.
    \item \emph{Transitivity:} For any $i, j, k \in [n]$ we have that if $i \preceq_{\pi} j$ and $j \preceq_{\pi} k$, then $i \preceq_{\pi} k$.
\end{enumerate}
For $i, j \in [n]$, let $i \prec_\pi j$ if $i \preceq_\pi j$ and $i \neq j$.
An element $i \in [n]$ is \emph{maximal} if there is no $j \neq i \in [n]$ with $i \prec_\pi j$.
The definition of \emph{minimal} is analogous.

A poset may be represented pictorially by a Hasse diagram, which is a drawing of its transitive reduction.
Formally, for any $i, j \in [n]$, we say that $j$ covers $i$ if $i \prec_\pi j$ and there is no $k \in [n]$ satisfying $i \prec_\pi k \prec_\pi j$.
Then, the Hasse diagram of $\pi$ is a directed graph whose vertices are $[n]$ and whose edges are given by the cover relations.
For example, if the edges in the graphs in Figure~\ref{fig: graphs} were all oriented upwards, we would obtain Hasse diagrams of two different posets on the symmetric group of order $4$, both with unique minimal element $1234$ and unique maximal element $4321$.

A totally ordered set is a poset in which every two elements are comparable, and a chain is a subset of a poset that is totally ordered for the induced order (note that the use of the word ``chain'' in the context of a poset differs from its use in Section~\ref{sec: markov chains}).
For example, $\{1234, 1243, 2143, 2413, 2431, 4231, 4321\}$ forms a chain in Figure~\ref{fig: adjacent transpositions}.
An \emph{anti-chain} is a subset of elements such that no pair of elements are comparable.
For example, $\{1324, 2134, 1243\}$ forms an anti-chain in Figure~\ref{fig: adjacent transpositions}.
A poset $\pi^* = ([n], \preceq_{\pi^*})$ is an extension of $\pi = ([n], \preceq_{\pi})$ if, for any $i, j \in [n]$ with $i \preceq_\pi j$, we have that $i \preceq_{\pi^*} j$ as well.
An extension is a linear extension if it is totally ordered.
Refer to Stanley~\cite[Chapter~3]{stanley2012enumerative} for more on posets.

Now, let $[n]$ represent a set of players.
A \emph{ranking} is a totally ordered set $\pi = ([n], \preceq_{\pi})$, which we treat as a permutation $\pi \in \mathfrak{S}_n$ (in explicit notation) given the natural ordering $\leq$ of $\mathbb{N}$.
In particular, we have that
\begin{equation*}
    j \preceq_\pi i \iff \pi(i) \leq \pi(j)
\end{equation*}
for all $i, j \in [n]$.
We more generally extend the notion of ranking to proper subsets of players, so that a ranking is a totally ordered set $r = (S_r, \preceq_r)$ for some $S_r \subseteq [n]$, which we again treat as a bijection $r: S_r \rightarrow \left[|S_r|\right]$ given the natural ordering $\leq$ of $\mathbb{N}$.
In this way, our problem input is a collection 
\begin{equation*}
    \mathcal{R} = \{r_1, r_2, \ldots, r_\ell\}    
\end{equation*}
of historical sports rankings on subsets of players (refer to Figure~\ref{fig: historical}).
In subsequent sections we will consider partial orders on the full set of players $[n]$ together with their corresponding linear extensions as part of our problem output (refer to Figure~\ref{fig: poset}).

Lastly, note that we adopt the following notation.
For $i \in [n]$, let $\mathcal{R}_{i} = \{r \in \mathcal{R}: i \in S_r\}$ denote the sub-collection of rankings containing player $i$.
Similarly, for distinct $i, j \in [n]$, let $\mathcal{R}_{ij} = \{r \in \mathcal{R}: i, j \in S_r\}$ denote the sub-collection of rankings containing both player $i$ and player $j$.
By convention, let $\mathcal{R}_{ii} = \emptyset$.

\section{Data-Driven Random Walks on the Symmetric Group}
\label{sec: data-driven random walks on the symmetric group}

In Section~\ref{sec: cutoffs and the meaning of GOAT} we discuss the meaning of GOAT in relation to assumptions (in particular, ranking cutoffs) on the problem input.
In Section~\ref{sec: transition rule} we design a data-driven transition rule for random walks on $\mathfrak{S}_n$.
In Section~\ref{sec: partial ordering} we adopt a notion of stochastic dominance to build a poset from the stationary distribution of this random walk.
In Section~\ref{sec: sampling} we discuss sampling procedures for this random walk.

\subsection{Cutoffs and the meaning of GOAT}
\label{sec: cutoffs and the meaning of GOAT}

Recall that $[n]$ represents the set of players and that the problem input is a collection 
\begin{equation*}
    \mathcal{R} = \{r_1, r_2, \ldots, r_\ell\}    
\end{equation*}
of historical sports rankings on subsets of the players, so that $\cup_{r \in \mathcal{R}} S_r = [n]$.
Now, consider any player $i \in [n]$ and any ranking $r \in \mathcal{R}_i$.
Certainly, player $i$ being ``top ranked'' in $r$ is consistent with them being the GOAT, but how ``top'' is top ranked?
A full sports ranking might involve thousands of players.
However, player $i$ being among the top $10$ in $r$ is much more indicative than them being among the top $100$, which is itself even much more indicative than them being among the top $1000$.
In some sense, we must adopt a definition for the concept of being at the top.

In the experimental part of this work, we restrict the rankings in $\mathcal{R}$ to involve the top $3$, $5$, $10$, and $20$ players as cutoffs from the full ranking data that might be available.
This is illustrated in Figure~\ref{fig: historical}.
We do this for two reasons:
\begin{enumerate}
    \item 
    The first reason is conceptual.
    As argued previously, the difference between a player being ranked $100$ or $101$ (let alone $1000$ or $1001$) is not particularly relevant in relation to them being the GOAT, whereas the difference between them being ranked $1$ or $2$ very much is.
    By experimenting with different cutoffs we gain insights into the consequences of this subjective choice.
    \item 
    The second reason is computational.
    Indeed, if each ranking $r \in \mathcal{R}$ involved thousands of players, the set $[n]$ would be very large and $n!$ would be astronomically large.
    In our experiments, the different cutoffs always keep the size of the set under $300$ players.
    Note that $300!$ is still an astronomically large number, with $615$ decimal digits.
\end{enumerate}
In short, we assume that being the GOAT is consistent with being a ``good top $\kappa$'' player, and we experiment with the choice of $\kappa \in \mathbb{N}$.
In this way, having ever made it to the top $\kappa$ is a necessary condition for being the GOAT,
which restricts the set of contenders.

\subsection{Transition Rule}
\label{sec: transition rule}

Let $W \in \mathbb{R}_{\geq 0}^{n \times n}$ be a matrix with entries
\begin{equation}
\label{eq: W}
    W_{ij} 
    = 
    1 + \underbrace{\left|\mathcal{R}_i\right| / \left|\mathcal{R}_j\right|}_{\text{prevalence ratio}} \cdot \ \underbrace{\left|\{r \in \mathcal{R}_{ij}: j \prec_{r} i\}\right|}_{\text{$\#$ of times $i$ beats $j$}}
\end{equation}
for all distinct $i, j \in [n]$.
The second term in the product, $\left|\{r \in \mathcal{R}_{ij}: j \prec_{r} i\}\right|$, encodes the number of cut off historical rankings in which player $i$ is ranked better than player $j$.
We scale this quantity by the first term in the product, $\left|\mathcal{R}_i\right| / \left|\mathcal{R}_j\right|$, which is the players' relative prevalence above the cutoff.
The motivation for this is as follows.
\begin{remark}
Suppose players $i$ and $j$ have short and long prevalence in the cut off historical rankings, respectively.
In particular, it might be that player $i$ made it to the top of the ranking for a brief period of time, consequently beating player $j$ during this period.
However, since the prevalence of player $i$ is short, there might be significant periods of time during which player $i$ fell below the cutoff while player $j$ did not\textemdash by definition, player $j$ beat player $i$ during these periods.
This artifact of cutoffs is not captured by the term $\left|\{r \in \mathcal{R}_{ij}: j \prec_{r} i\}\right|$, which by itself favors player $i$ over player $j$.
We correct for this in the model using their prevalence ratio $\left|\mathcal{R}_i\right| / \left|\mathcal{R}_j\right|$.
\end{remark}
Intuitively, the term $\left|\mathcal{R}_i\right| / \left|\mathcal{R}_j\right|$ ensures that if the players have similar prevalence above the cutoff, the data points in which player $i$ beats player $j$ are taken at face value.
Conversely, if the prevalence of player $i$ is much smaller than that of player $j$, the data points in which player $i$ beats player $j$ are proportionally discounted.
Finally, the additive unit is included to later on obtain an irreducible random walk.
Note that \eqref{eq: W} can be pre-computed efficiently.

Next, for each $\pi \in \mathfrak{S}_n$ and each $1 \leq i < j \leq n$, let
\begin{equation}
\label{eq: W(s_ij | pi)}
    \mathcal{W}(s_{ij} | \pi) \propto 
    \begin{cases}
        W_{ji}, & \text{ if } j \prec_\pi i \\
        W_{ij}, & \text{ otherwise }
    \end{cases}
\end{equation}
form a probability distribution over (not necessarily adjacent) transpositions.
The interpretation of $\mathcal{W}(s_{ij} | \pi)$ is as follows:
\begin{itemize}
    \item $j \prec_\pi i \implies i \prec_{s_{ij} \cdot \pi} j$, whereas
    \item $i \prec_\pi j \implies j \prec_{s_{ij} \cdot \pi} i$.
\end{itemize}
In other words, the probability of selecting transposition $s_{ij}$ given $\pi$ is proportional to the agreement between the resulting permutation $s_{ij} \cdot \pi$ and the historical data, particularly with respect to the relative order between players $i$ and $j$.

\begin{remark}
\label{remarl: non-adjacent}
Note that \eqref{eq: W}-\eqref{eq: W(s_ij | pi)} highlight the need to consider the set of all transpositions, not just those that are adjacent (recall the difference in Figure~\ref{fig: graphs}).
Indeed, depending on the structure of $\mathcal{R}$ and the indexing of the players, it is possible that
\begin{equation*}
    \left|\mathcal{R}_{ij}\right| = 0
\end{equation*}
for a large fraction of the adjacent choices of $1 \leq i < j (= i + 1) \leq n$, in turn rendering \eqref{eq: W} uninformative.
For example, this could occur if the players are indexed in such a way that adjacent players generally cannot be compared pairwise because of their prevalence during non-overlapping time periods.
\end{remark}


Next, we use \eqref{eq: W(s_ij | pi)} to compactly define a row-stochastic matrix $P \in \mathbb{R}_{\geq 0}^{n! \times n!}$ such that, for any given $\pi \in \mathfrak{S}_n$, we have the transition probabilities
\begin{equation}
\label{eq: P}
    P_{\pi,\sigma} 
    =
    \begin{cases}
        \frac{1}{2} \cdot \mathcal{W}(s_{ij} | \pi), & \text{ if } \sigma = s_{ij} \cdot \pi \text{ for some } 1 \leq i < j \leq n \\
        1/2, & \text{ if } \sigma = \pi \\
        0, & \text{ otherwise. } \\
    \end{cases}
\end{equation}
The matrix $P$ governs a data-driven random walk on $\mathfrak{S}_n$.
Note that its support along any given row is of size $O(n^2)$ and can be evaluated efficiently on the fly.

\begin{lemma}
\label{lemma: P}
Given~\eqref{eq: P}, $P$ is irreducible and aperiodic.
\end{lemma}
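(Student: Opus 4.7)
The plan is to prove aperiodicity and irreducibility separately, since each follows in a rather direct manner from the construction of $P$ combined with facts already established in the excerpt.

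For aperiodicity, I would simply observe that \eqref{eq: P} gives $P_{\pi, \pi} = 1/2 > 0$ for every $\pi \in \mathfrak{S}_n$. Hence $\{t \in \mathbb{N} : P^t_{\pi\pi} > 0\}$ contains $1$ (and in fact all positive integers), so the greatest common divisor is $1$ for every state. This matches the sufficient condition for aperiodicity mentioned in Section~\ref{sec: markov chains}.

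For irreducibility, the key observation to establish first is that every transposition is selected with strictly positive probability from every state. Concretely, inspecting \eqref{eq: W} we see $W_{ij} \geq 1 > 0$ for all distinct $i, j \in [n]$, because the second summand is a product of nonnegative quantities and the additive $1$ is always present. Therefore the two candidate values $W_{ji}$ and $W_{ij}$ appearing in \eqref{eq: W(s_ij | pi)} are both strictly positive, the normalizing constant implied by the proportionality is positive and finite, and $\mathcal{W}(s_{ij} \mid \pi) > 0$ for every $\pi \in \mathfrak{S}_n$ and every pair $1 \leq i < j \leq n$. By \eqref{eq: P} this yields $P_{\pi, s_{ij} \cdot \pi} > 0$ for every such $\pi$ and every transposition.

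Given this, I would conclude irreducibility by invoking Fact~\ref{fact: product of 2-cycles} (equivalently, Remark~\ref{remark: g_n}): for any $\pi, \sigma \in \mathfrak{S}_n$, the permutation $\sigma \pi^{-1}$ can be written as a product $s_{i_k j_k} \cdots s_{i_1 j_1}$ of at most $n-1$ transpositions. Applying these transpositions sequentially produces a path
\[
\pi \to s_{i_1 j_1} \cdot \pi \to s_{i_2 j_2} s_{i_1 j_1} \cdot \pi \to \cdots \to \sigma
\]
of length at most $n-1$ in the support graph of $P$, and each individual step has positive probability by the previous paragraph. Hence $P^{n-1}_{\pi \sigma} > 0$ (or one can simply use $P^\tau_{\pi\sigma} > 0$ for some $\tau \leq n-1$), establishing irreducibility.

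There is no substantive obstacle: the only point that requires a moment of care is verifying that the additive $1$ in \eqref{eq: W} is precisely what prevents any $W_{ij}$ from vanishing, since otherwise a pair of players never jointly appearing in any cutoff ranking (i.e., $\mathcal{R}_{ij} = \emptyset$) could produce a zero entry and break strong connectivity of the support graph. This is exactly the role foreshadowed in the text immediately after \eqref{eq: W}.
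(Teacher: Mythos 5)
Your proposal is correct and follows essentially the same route as the paper's proof: the additive $1$ in \eqref{eq: W} makes every transposition move available from every state (so the support graph is the bi-directed Bruhat graph with self-loops, strongly connected by Fact~\ref{fact: product of 2-cycles}), and the lazy step $P_{\pi,\pi} = 1/2$ gives aperiodicity. Your write-up is merely more explicit than the paper's, spelling out the positivity of $\mathcal{W}(s_{ij} \mid \pi)$ and the path construction via a transposition factorization of $\sigma\pi^{-1}$.
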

\begin{proof}
The additive $1$ term in \eqref{eq: W} implies the support graph of $P$ is the bi-directed Bruhat graph, with the addition of self-loops.
Fact~\ref{fact: product of 2-cycles} implies this graph is strongly connected, so that $P$ is irreducible.   
Moreover, the non-zero probability of a lazy step in the second case of \eqref{eq: W(s_ij | pi)} implies $P_{\pi,\pi} > 0$ for all $\pi \in \mathfrak{S}_n$, so that $P$ is aperiodic.
\end{proof}
Lemma~\ref{lemma: P} and Theorem~\ref{theorem: stationary} directly imply the following.
\begin{corollary}
\label{corollary: stationary}
$P$ has a unique stationary distribution $\mu$ over $\mathfrak{S}_n$.
\end{corollary}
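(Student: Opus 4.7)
The plan is to observe that this corollary is an immediate application of the general existence-and-uniqueness theorem for stationary distributions (Theorem~\ref{theorem: stationary}) to the specific chain $P$ analyzed in Lemma~\ref{lemma: P}. There is essentially no new content to produce; the work has already been done in establishing the two ingredients.

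Concretely, I would proceed as follows. First, I would note that $P$ is a finite Markov chain: its state space is $\mathfrak{S}_n$, which by Fact~\ref{fact: n!} has cardinality $n!$, so $P \in \mathbb{R}^{n! \times n!}$ is a well-defined row-stochastic matrix. Second, I would invoke Lemma~\ref{lemma: P} to assert that $P$ is irreducible (the aperiodicity conclusion of that lemma is not needed here, but will be important later when invoking the Convergence Theorem~\ref{theorem: convergence} for sampling). Third, I would apply Theorem~\ref{theorem: stationary} directly to $P$, which yields the existence of a unique probability distribution $\mu \in \mathbb{R}^{n!}$ over $\mathfrak{S}_n$ satisfying $\mu = \mu P$. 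This is exactly the claimed stationary distribution.

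There is no real obstacle to overcome. The only subtlety worth flagging is that Theorem~\ref{theorem: stationary} as stated requires only irreducibility for existence and uniqueness, so aperiodicity is not logically needed in this corollary even though Lemma~\ref{lemma: P} proves both. I might include a brief parenthetical remark to that effect so the reader is not misled into thinking aperiodicity is being used here. Beyond this, the proof is a one-line citation of the two preceding results.
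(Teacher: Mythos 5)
Your proposal is correct and matches the paper exactly: the paper derives the corollary in one line by combining Lemma~\ref{lemma: P} (irreducibility) with Theorem~\ref{theorem: stationary}. Your observation that aperiodicity is not logically needed here (only for the later use of Theorem~\ref{theorem: convergence}) is accurate and a fine clarifying remark.
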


\subsection{Partial Ordering}
\label{sec: partial ordering}

In light of Corollary~\ref{corollary: stationary}, consider the stationary distribution $\mu$ of $P$.
We combine this distribution with a suitable notion of stochastic dominance to obtain a partial order $([n], \preceq_\mu)$ over the players.
In particular, for each $i, j \in [n]$, let
\begin{equation}
\label{eq: dominance}
    j \preceq_\mu i \iff \Pr_{X \sim \mu}[X(i) \leq k] \geq \Pr_{X \sim \mu}[X(j) \leq k], \ \forall 1 \leq k \leq n.
\end{equation}
Recall that $\mu$ is a distribution over $\mathfrak{S}_n$, so that $X$ is a random permutation (in explicit notation) and $X(i) \in [n]$ denotes the ranking of player $i \in [n]$.
Therefore, at an intuitive level, \eqref{eq: dominance} states that player $i$ is conclusively ``better'' than player $j$ in $\preceq_\mu$ if and only if, given that $X \sim \mu$, player $i$ is across the board more likely to rank better than player $j$.

Let $\mathcal{L}(\preceq_\mu)$ denote the set of linear extensions of $\preceq_\mu$.
In the experimental part of this work, we will consider the average rank of a player with respect to a random linear extension of $\preceq_\mu$ as a summary statistic of the player's historical performance.
In particular, for each $i \in [n]$ we will consider
\begin{equation}
\label{eq: avg}
    \overline{X}(i) = \frac{1}{N} \sum_{t=1}^T X_t(i),
\end{equation}
for some $T \in \mathbb{N}$ and where $X_1, X_2, \ldots, X_T$ are sampled uniformly at random from $\mathcal{L}(\preceq_\mu)$.

\subsection{Sampling}
\label{sec: sampling}

Our next goal is to obtain independent samples from the stationary distribution $\mu$ induced by \eqref{eq: P} to form an empirical poset according to \eqref{eq: dominance}.
Based on Theorem~\ref{theorem: convergence}, this can be achieved through the repeated execution of \eqref{eq: P}.
At this point, the question of speed of convergence comes to the forefront.
If we knew the mixing time $\tau \in \mathbb{N}$ of this Markov chain for a desired tolerance to error, we could execute the random walk collecting every $\tau$th sample.
However, based on the discussion in Section~\ref{sec: related work}, we anticipate this is a very challenging theoretical question.
Neither is it easy to obtain numerical upper bounds based on the eigenvalues of the transition matrix $P$ due to its $n! \times n!$ size.

Given the applied scope of this work, we do not attempt to answer these questions and simply acknowledge that analytical questions around data-driven random walks on the symmetric group are both well-motivated and of independent mathematical interest.
In the experimental part of this work, we execute this random walk for a large number of steps collecting every $2(n-1)$th sample.
Given the $n-1$ diameter noted in Remark~\ref{remark: g_n} and the $1/2$ probability of taking a lazy step in \eqref{eq: P}, this is the expected number of steps needed for the entire state space to be reachable from any given state.

\section{Numerical Implementation}
\label{sec: numerical implementation}

In this section, we implement our methods using singles rankings from the ``Open Era'' of the ATP and the WTA.
As discussed in Section~\ref{sec: cutoffs and the meaning of GOAT}, we experiment with taking the top $3$, $5$, $10$, and $20$ players as different cutoffs from the full ranking data that is available.
For each experiment, we execute a random walk as described in Section~\ref{sec: sampling} until we collect $100,000$ samples.
We then use these samples to obtain data-driven posets over the players, as described in Section~\ref{sec: partial ordering}.
We illustrate these posets as horizontal Hasse diagrams with maximal elements on the right-hand side.
The full Hasse diagrams we obtain are shown in Figure~\ref{fig: full hasse}, but in short we will zoom in on their top portions.
\begin{figure}[ht]
    \begin{subfigure}{0.495\linewidth}
        \includegraphics[width=\linewidth]{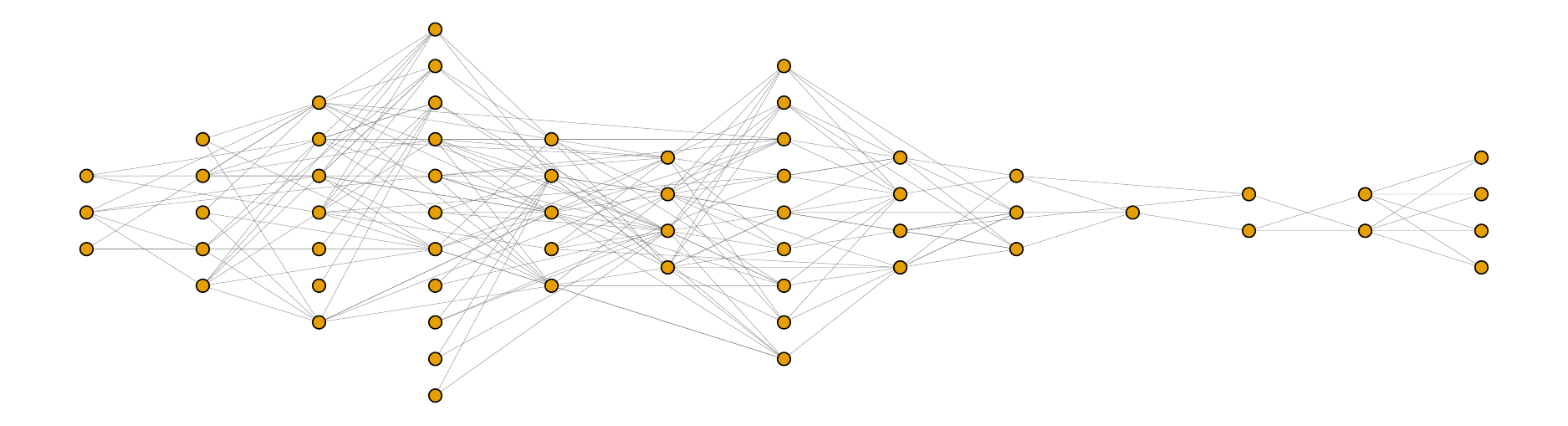}
        \includegraphics[width=\linewidth]{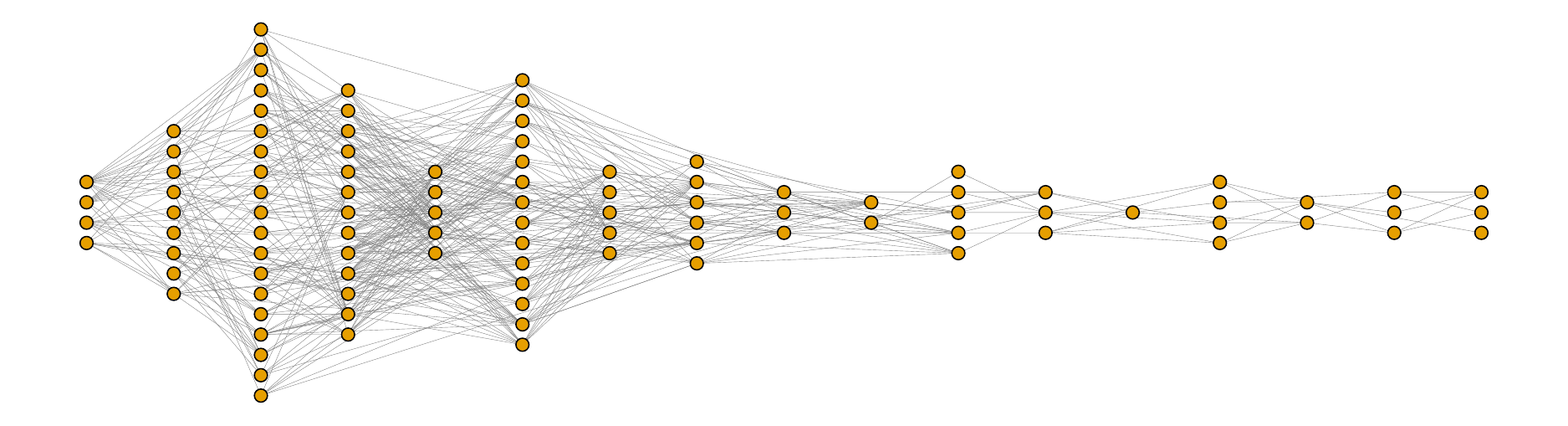}
        \includegraphics[width=\linewidth]{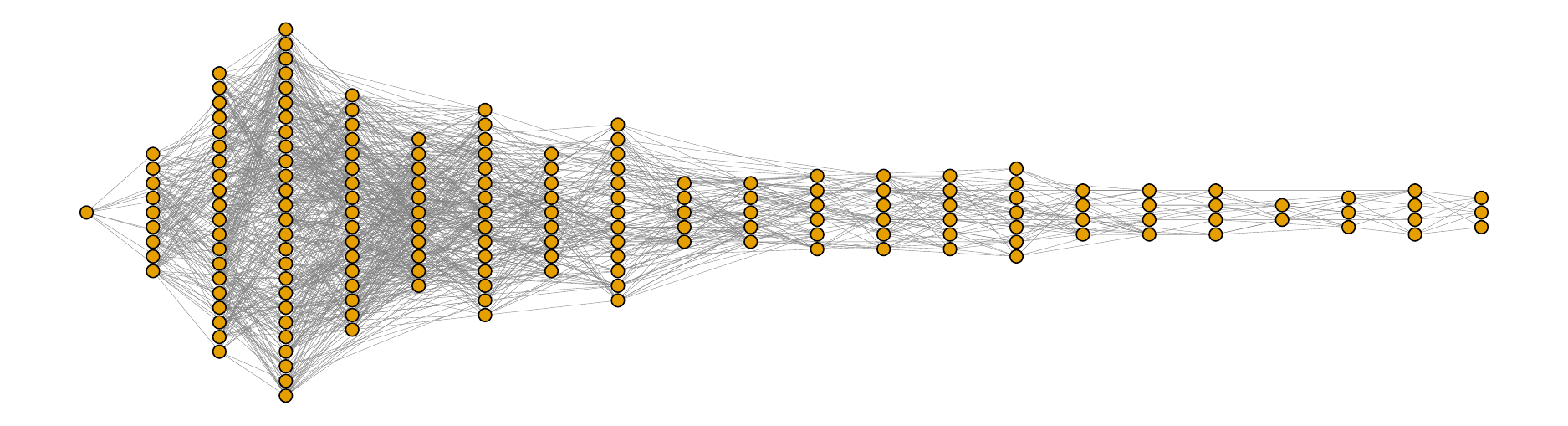}
        \includegraphics[width=\linewidth]{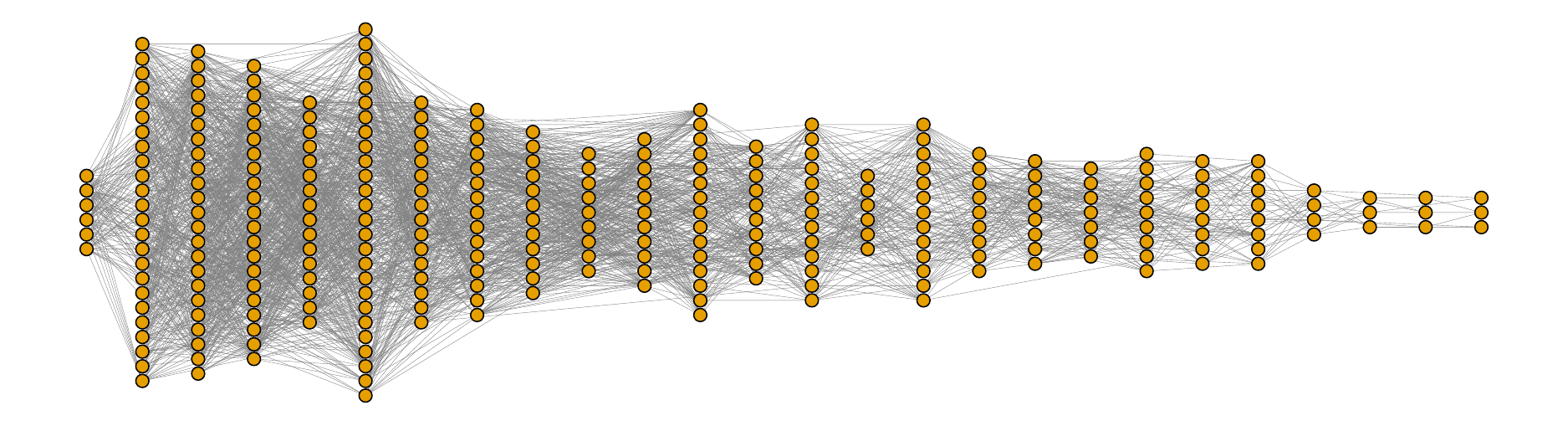}
    \caption{
        Rankings from the ATP.
    }
    \centering
    \end{subfigure}
    \begin{subfigure}{0.495\linewidth}
        \includegraphics[width=\linewidth]{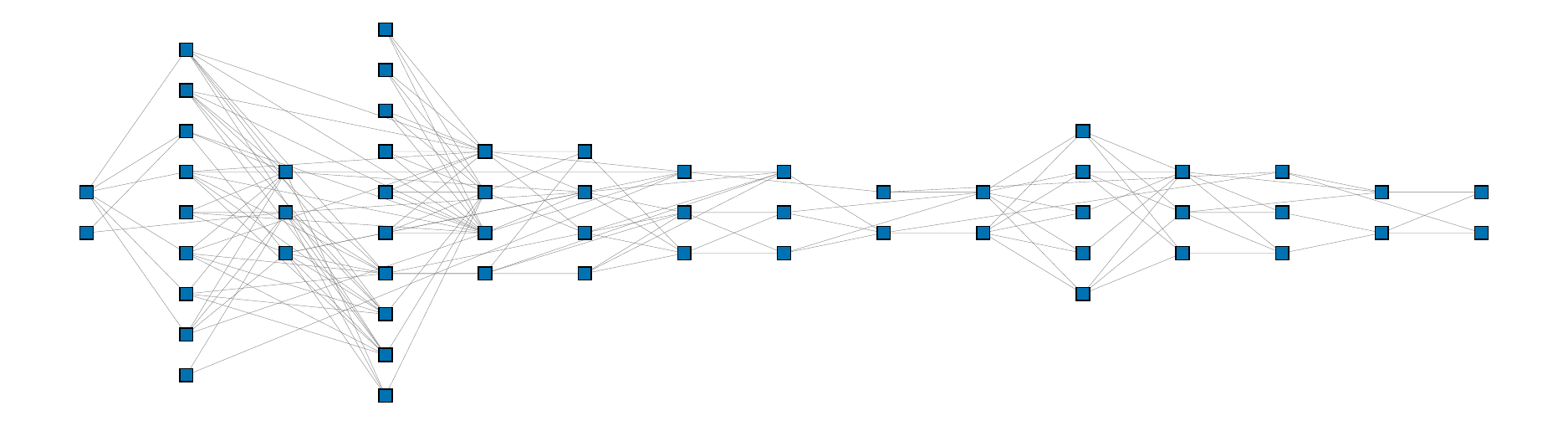}
        \includegraphics[width=\linewidth]{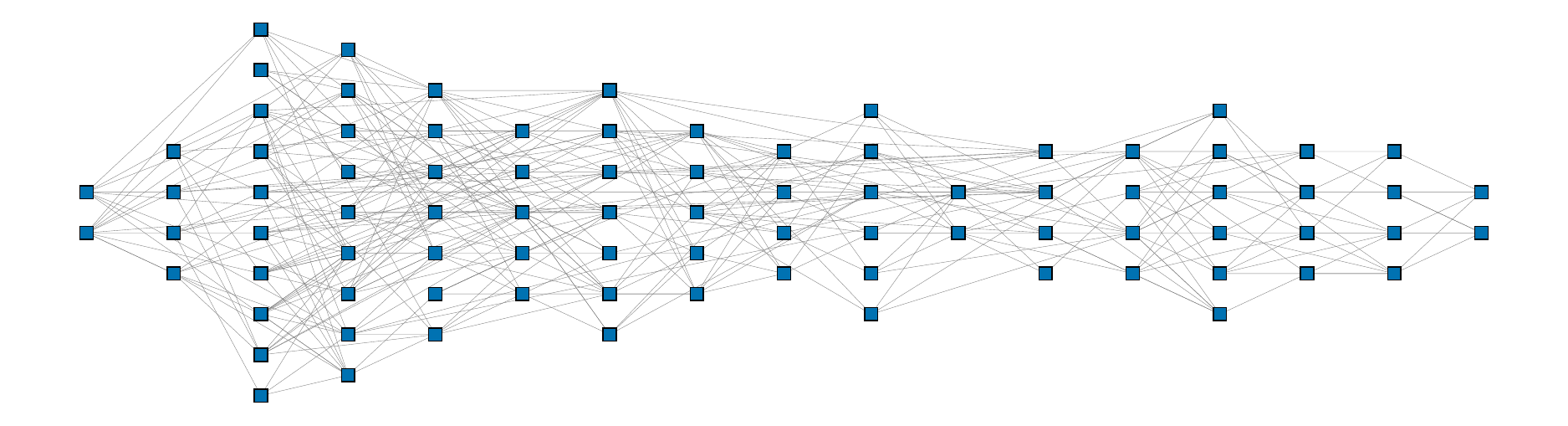}
        \includegraphics[width=\linewidth]{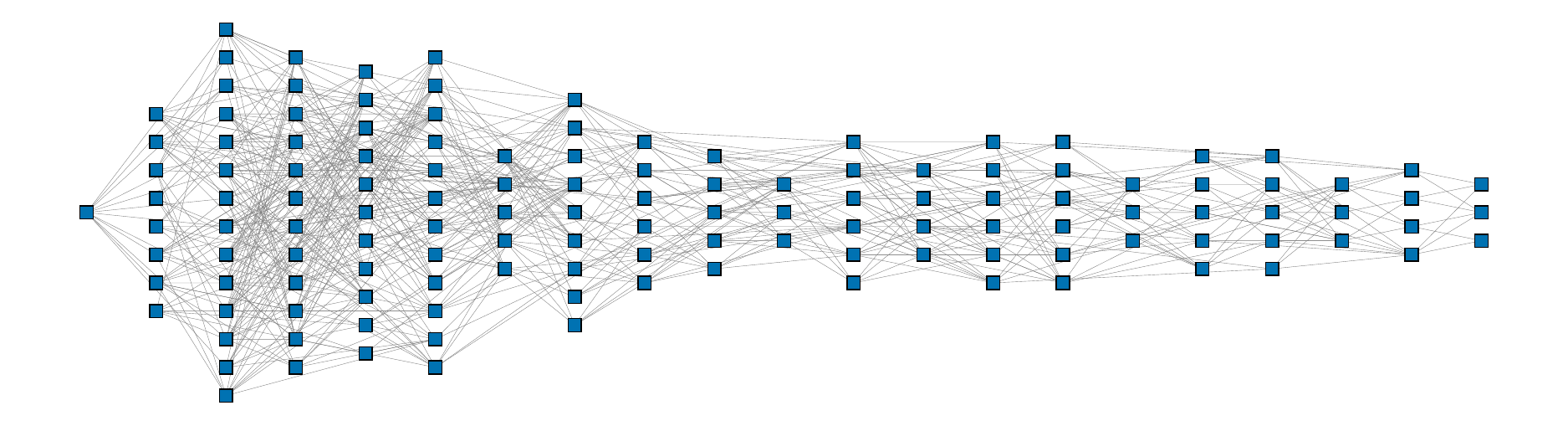}
        \includegraphics[width=\linewidth]{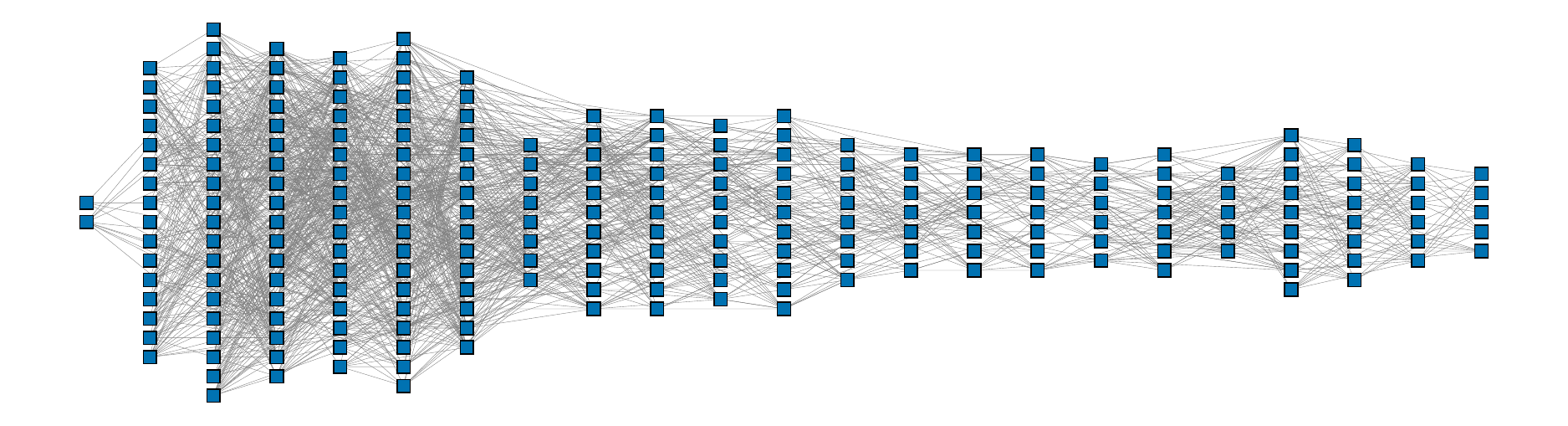}
    \caption{
        Rankings from the WTA.
    }
    \centering
    \end{subfigure}
    \caption{
    Full Hasse diagrams from the experiments in Section~\ref{sec: numerical implementation}.
    From top to bottom, the rows correspond to top the posets over top 3, 5, 10, and 20 players, respectively.
    }
    \label{fig: full hasse}
\end{figure}
To estimate the average rank of the players in a given poset, we first sample $100,000$ linear extensions uniformly at random using the package \texttt{SageMath} by~\cite{sagemath} and then follow \eqref{eq: avg}.

The remainder of this section is organized as follows.
In Section~\ref{sec: ranking data} we further describe the datasets we use for our experiments.
In Section~\ref{sec: rankings from the atp} we discuss our results using rankings from the ATP, whereas in Section~\ref{sec: rankings from the wta} we discuss our results using rankings from the WTA.
Refer to Tables~\ref{tab: atp} and~\ref{tab: wta} for a summary of some of our findings.
Lastly, in Section~\ref{sec: comparing the associations} we compare the results obtained for the ATP and the WTA to note qualitative differences between the two associations.
All mentioned statistics that are external to our numerical implementation can be verified online~\cite{ATP_Tour_2024, WTA_Tennis_2024}. 
The source code for these experiments is available at \texttt{github.com/jcmartinezmori/goat}.

\subsection{Ranking Data}
\label{sec: ranking data}

Both the ATP and the WTA singles rankings are based on rolling $52$-week period systems that award points to individual players according to their placement in different tournament categories.
Rankings are typically published on Mondays, and points are dropped $52$ weeks after being awarded.
Note that the point systems have changed over the years: details on the current systems are available online in the official rulebooks~\cite{atp_rulebook_2024, wta_rulebook_2024}.

We retrieved historical rankings from the repositories compiled by Sackmann~\cite{tennisATP,tennisWTA} for both the ATP and the WTA.
In the case of the ATP, this corresponds to $2230$ weekly rankings from $1973$ to $2023$ (Sackmann notes that this data is complete from 1985, but that it is intermittent from $1973$ to $1984$ and that it is missing $1982$). 
In the case of the WTA, this corresponds to $2052$ weekly rankings from $1984$ to $2023$.

Naturally, the limitations of the data we use induce limitations on our experimental results.
Aside from missing data points, we in particular do not consider the ``Amateur Era'' and only partially capture players that were active during the the onset of the ``Open Era'' rankings.
As a concrete example, we do not consider Margaret Court; a widely successful player in women's professional tennis that was active from 1960 to 1977.
This data availability issue speaks to the difficulty of capturing the Amateur Era due to the lack of a centralized ranking system.

\subsection{Rankings from the ATP}
\label{sec: rankings from the atp}

In Figure~\ref{fig: atp-3} we plot the top portion of the Hasse diagram of the poset over ``top 3 players,''  where Djokovic, Federer, Nadal, and Sampras are incomparable maximal elements\textemdash no one of these players is conclusively better than another.
Note also that they all have a similar average rank.
Conversely, we conclude that each of Djokovic, Federer, Nadal, and Sampras is better than Connors and Lendl as a top 3 player.
Similarly, as a top 3 player, Connors is better than Andre Agassi, who is better than Bj\"orn Borg, who is better than Andy Murray, who is better than Carlos Alcaraz.
In other words, these players form a chain in the poset.
Formally, this is
\begin{equation*}
    \text{Djokovic} \succ_\mu \text{Connors} \succ_\mu \text{Agassi} \succ_\mu \text{Borg} \succ_\mu \text{Murray} \succ_\mu \text{Alcaraz}.
\end{equation*}
This is also reflected by significant jumps in the average rank from one player to the next.
Interestingly, Alcaraz is near the top of the Hasse diagram despite turning professional most recently, in 2018.
\begin{figure}[ht]
    \centering
    \includegraphics[width=\linewidth]{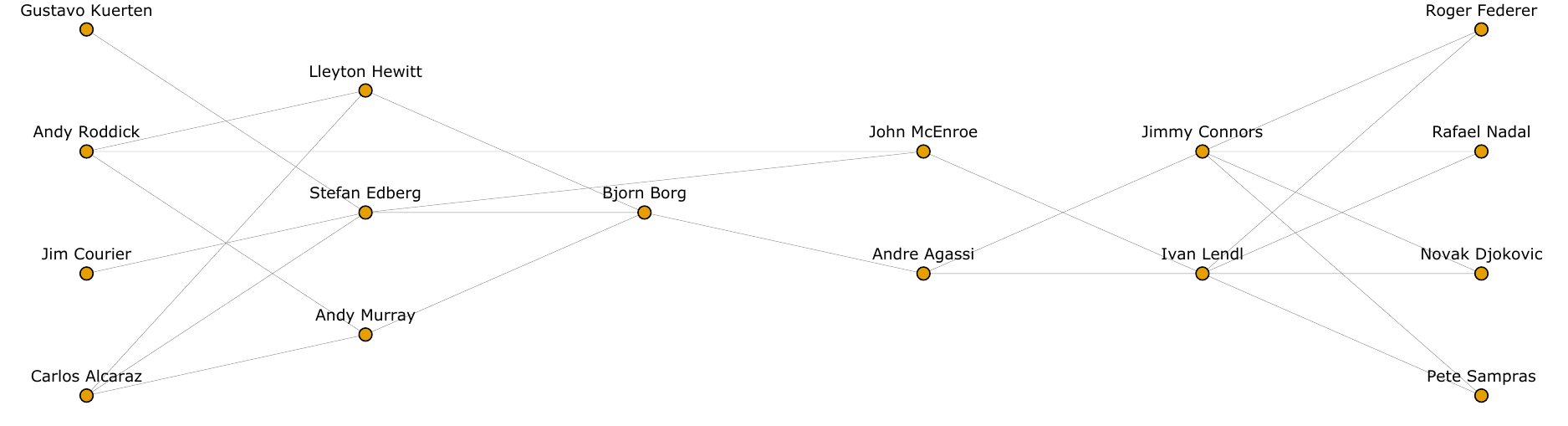}
    \includegraphics[width=\linewidth]{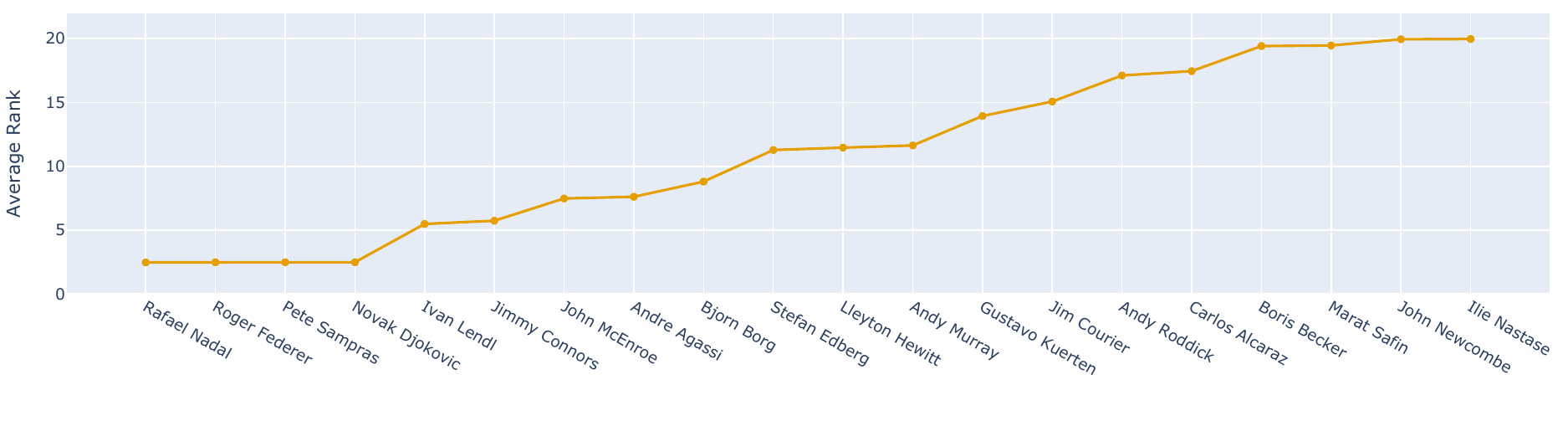}
    \caption{
        Cutoff at top $3$ of full ATP rankings.
    }
    \label{fig: atp-3}
\end{figure}

In Figure~\ref{fig: atp-5} we plot the top part of the Hasse diagram of the poset over ``top 5 players,'' where Djokovic, Federer and Nadal remain incomparable maximal elements, although Federer fares slightly better than the other two in terms of average rank.
However, Sampras is no longer a maximal element: as a top 5 player, he is worse than both Federer and Nadal, and he is incomparable to Djokovic.
Sampras is moreover incomparable to Connors and Lendl (both of whom he was better than as a top 3 player).
Note that Connors, Djokovic, and Lendl are also incomparable as top 5 players: their difference in average rank favors Djokovic, but it is not sufficiently conclusive.
Moreover, note that Alcaraz no longer appears near the top of the Hasse diagram, having been displaced by other players who are better top 5 players than they are top 3 players, such as Guillermo Vilas (this is reflected in the average rank plot).
First, note that Alcaraz and Vilas did not overlap in competition: Vilas played professionally from 1968 to 1992.
Second, note that Alcaraz transitioned very quickly from below the top 20 to within the top 3 from February to September of 2022, and has consistently prevailed within the top 3 until the time of writing.
His prevalence within the top 3 is significant independently of his short career at present.
However, when the cutoff is relaxed to top 5, his short career-to-date puts him at a disadvantage compared to retired players with significant career-long prevalence in the top 5 (but perhaps not in the top 3), such as Vilas.
\begin{figure}[ht]
    \centering
    \includegraphics[width=\linewidth]{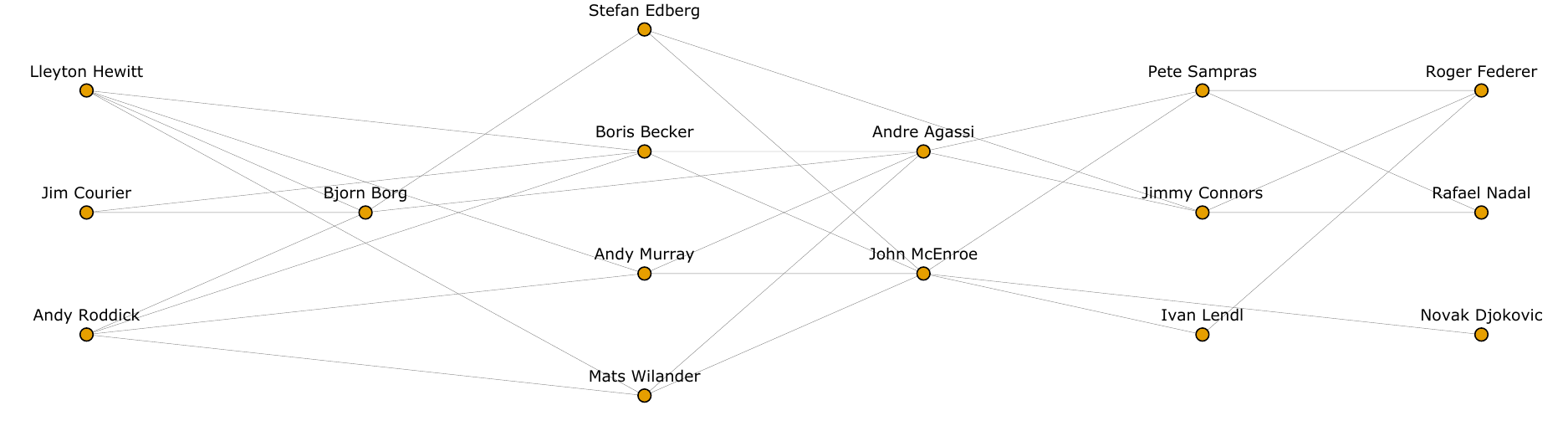}
    \includegraphics[width=\linewidth]{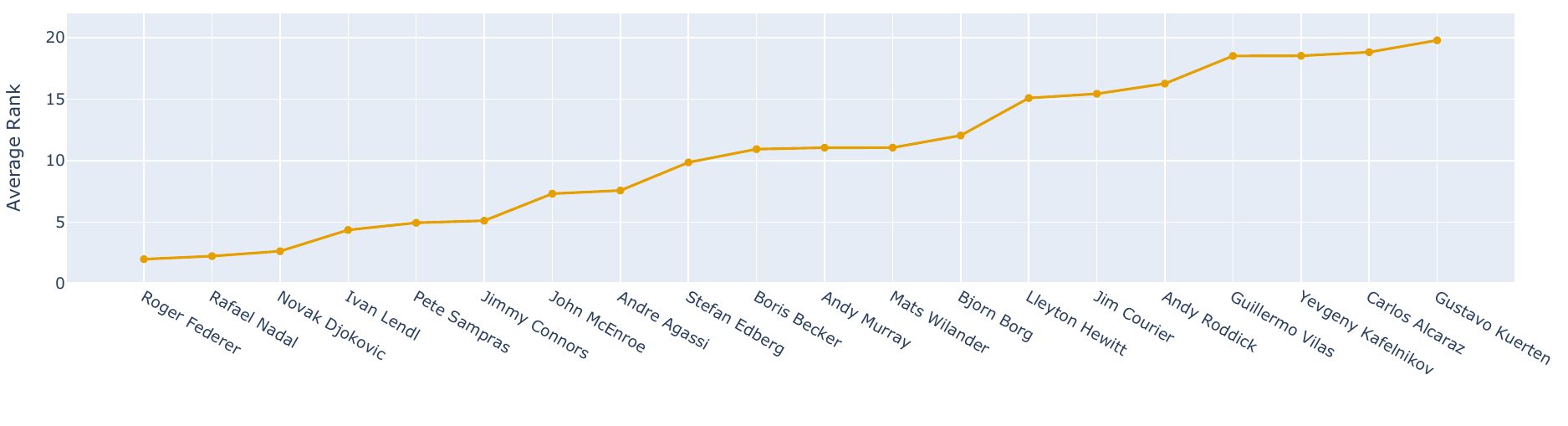}
    \caption{
        Cutoff at top $5$ of full ATP rankings.
    }
    \label{fig: atp-5}
\end{figure}

In Figure~\ref{fig: atp-10} we plot the top part of the Hasse diagram of the poset over ``top 10 players,'' where Djokovic, Federer, and Nadal once again remain incomparable maximal elements.
However, unlike the previous cutoff, we conclude that Djokovic is better than Connors and Sampras as a top 10 player.
Qualitatively, we observe the inclusion of more players near the top of this Hasse diagram compared to the previous cases, indicating that there are more ``good top 10 players'' than there are good top 3 or top 5 players.
For example, Becker (who was far from maximal in Figures~\ref{fig: atp-3} and ~\ref{fig: atp-5}) is now close to maximal: only Federer and Nadal are better than him, and he is incomparable to Djokovic.
\begin{figure}[ht]
    \centering
    \includegraphics[width=\linewidth]{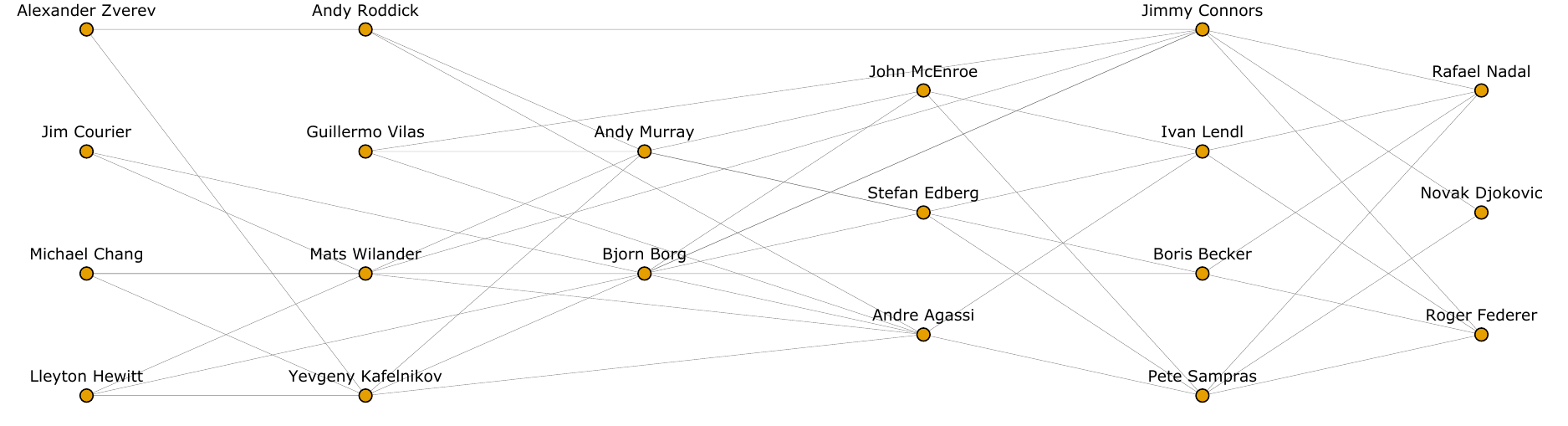}
    \includegraphics[width=\linewidth]{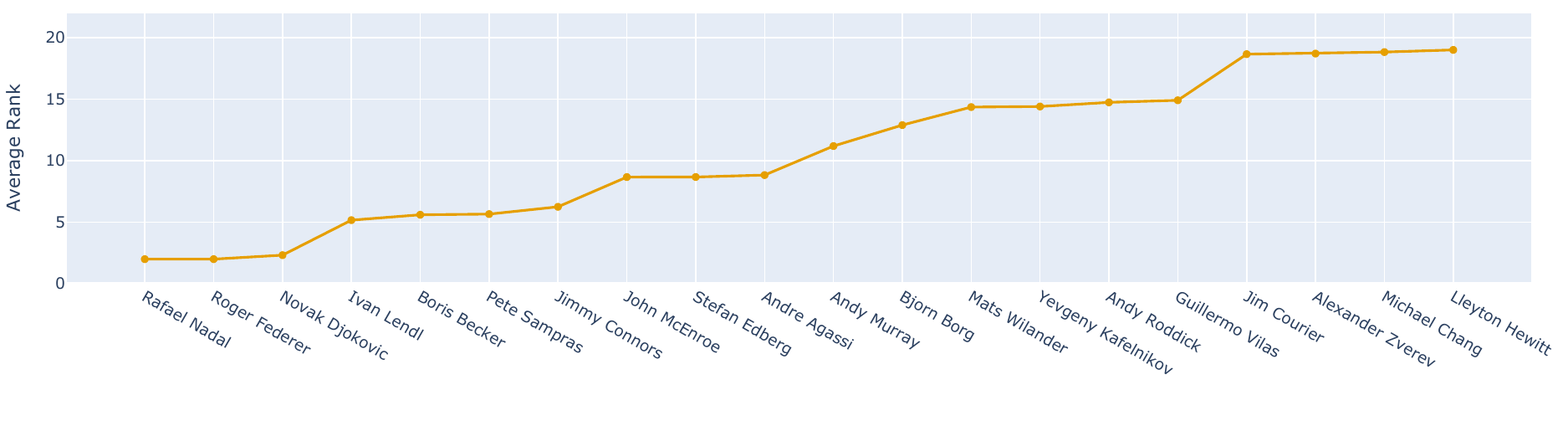}
    \caption{
        Cutoff at top $10$ of full ATP rankings.
    }
    \label{fig: atp-10}
\end{figure}

Lastly, in Figure~\ref{fig: atp-20} we plot the top part of the Hasse diagram of the poset over ``top 20 players.''
Interestingly, Federer and Nadal are joined by Lendl as incomparable maximal elements.
All three of them are better than Djokovic as a top 20 player.
Qualitatively, we again observe the inclusion of even more players near the top of this Hasse diagram compared to the previous cases.
For instance, the Djokovic $\succ_\mu$ Connors $\succ_\mu$ Agassi chain from Figure~\ref{fig: atp-3} is broken in Figure~\ref{fig: atp-20}, where all three of these players are equally close to the top of the Hasse diagram and form an anti-chain.
\begin{figure}[ht]
    \centering
    \includegraphics[width=\linewidth]{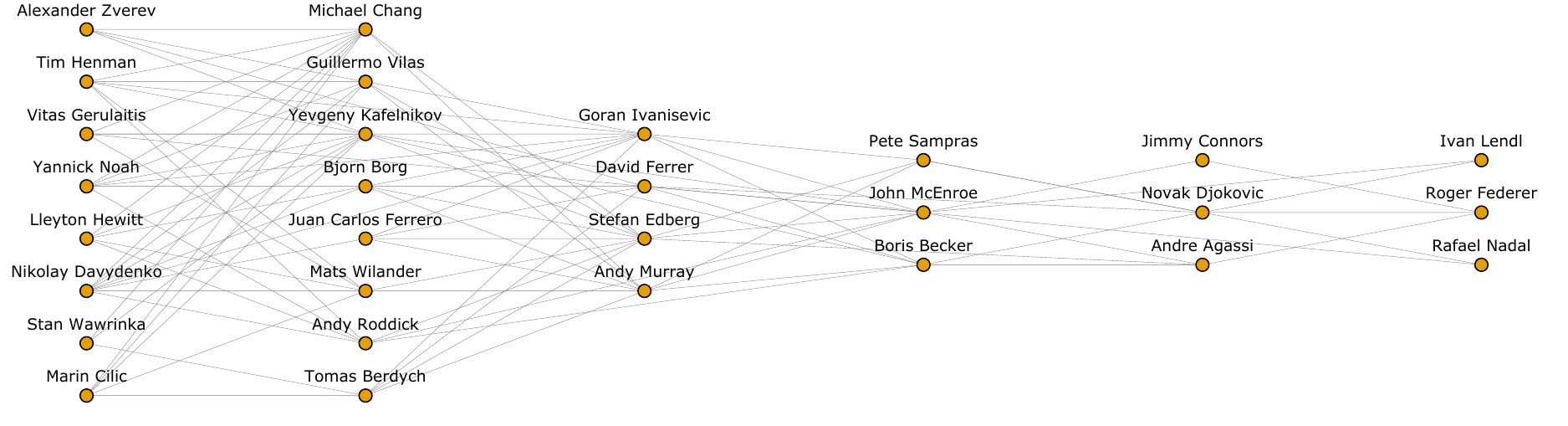}
    \includegraphics[width=\linewidth]{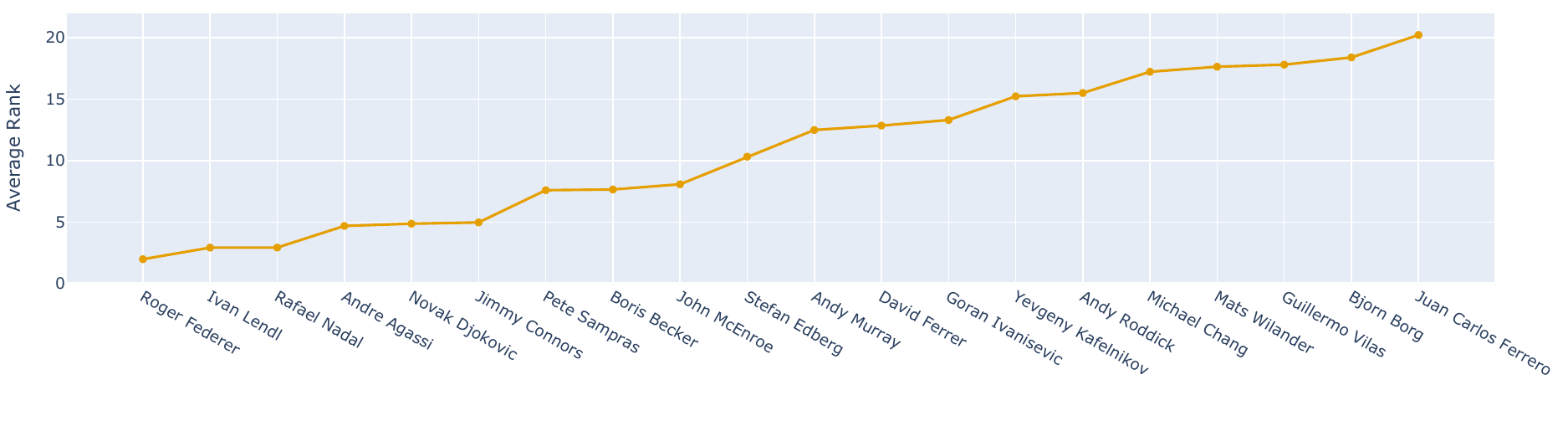}
    \caption{
        Cutoff at top $20$ of full ATP rankings.
    }
    \label{fig: atp-20}
\end{figure}

\subsection{Rankings from the WTA}
\label{sec: rankings from the wta}

In Figure~\ref{fig: wta-3} we plot the top portion of the Hasse diagram of the poset over ``top 3 players,''  where Graf and S. Williams are incomparable maximal elements\textemdash no one of these players is conclusively better than the other.
Note also that they both have a similar average rank, albeit with a slight advantage for Graf.
Note that S. Williams and Martina Navratilova are incomparable.
On the other hand, we conclude that, as top 3 players, Graf is better than Navratilova, who is better than Justine Henin, who is better than Aryna Sabalenka.
In other words, these players form a chain in the poset.
Formally, this is
\begin{equation*}
    \text{Graf} \succ_\mu \text{Navratilova} \succ_\mu \text{Henin} \succ_\mu \text{Sabalenka}.
\end{equation*}
We also conclude that, as a top 3 player, Sabalenka, Kim Clijsters, and Maria Sharapova are pairwise incomparable (i.e., they form an anti-chain).
\begin{figure}[ht]
    \centering
    \includegraphics[width=\linewidth]{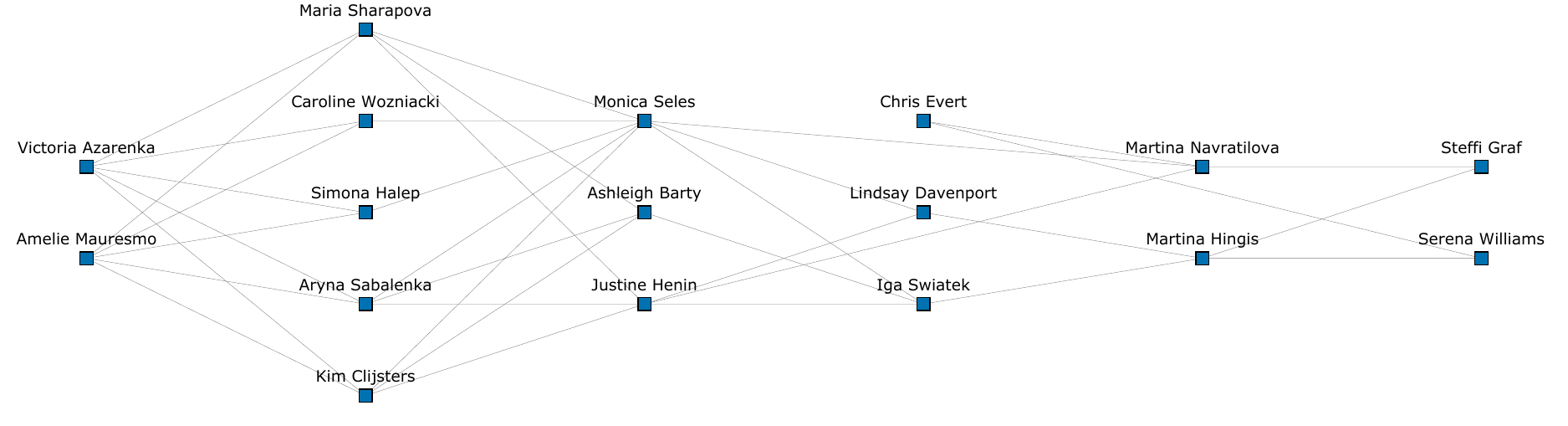}
    \includegraphics[width=\linewidth]{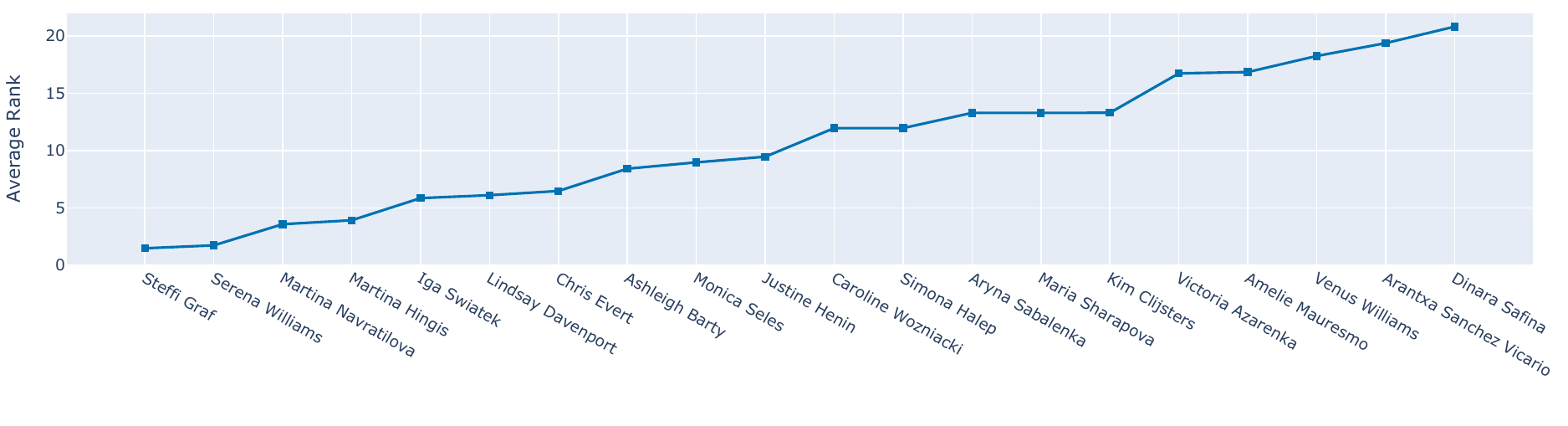}
    \caption{
        Cutoff at top $3$ of full WTA rankings.
    }
    \label{fig: wta-3}
\end{figure}

\begin{figure}[ht]
    \centering
    \includegraphics[width=\linewidth]{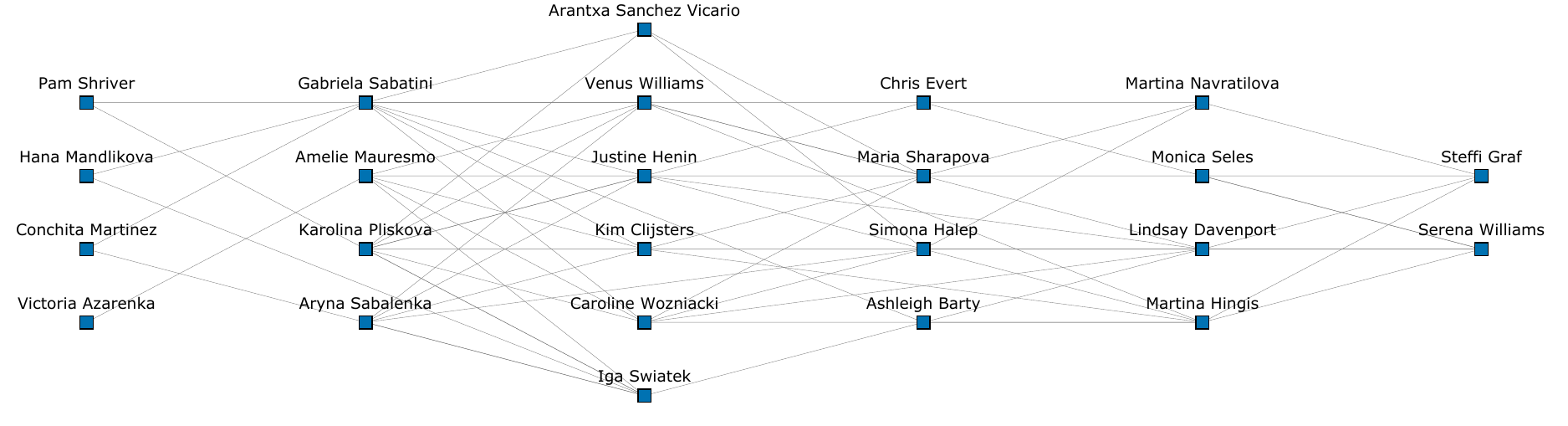}
    \includegraphics[width=\linewidth]{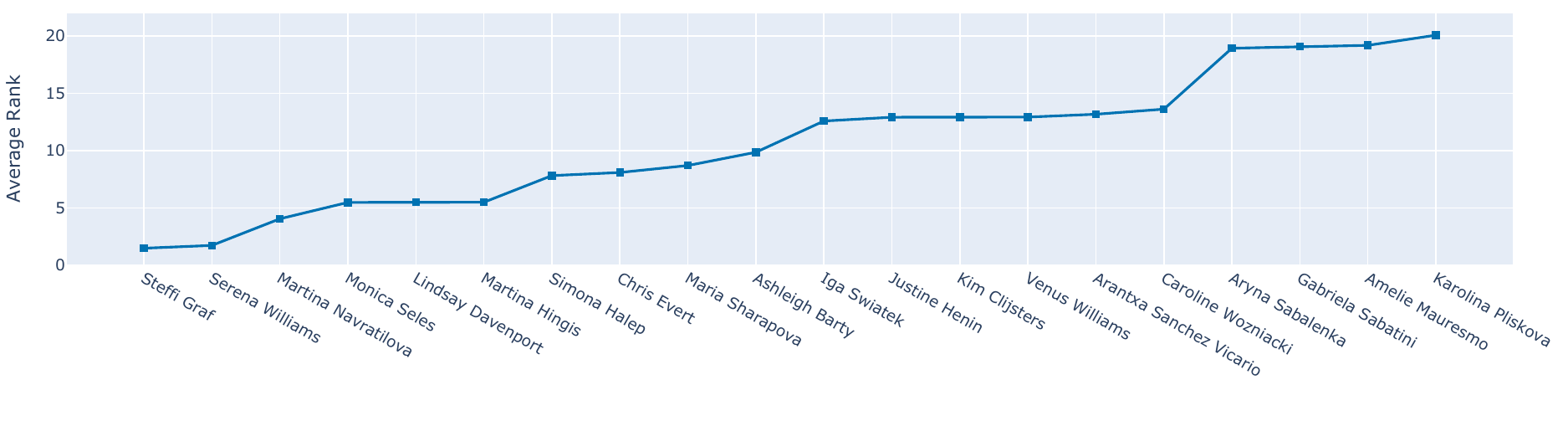}
    \caption{
        Cutoff at top $5$ of full WTA rankings.
    }
    \label{fig: wta-5}
\end{figure}
In Figure~\ref{fig: wta-5} we plot the top portion of the Hasse diagram of the poset over ``top 5 players,''  where Graf and S. Williams are again incomparable maximal elements.
Note that while Graf is better than Navratilova, S. Williams and Navratilova are incomparable.
Moreover, note that unlike the case in Figure~\ref{fig: wta-3}, as top 5 players, both Clijsters and Sharapova are better than Sabalenka.
This is a similar effect to that seen with Alcaraz in Section~\ref{sec: rankings from the atp}: Sabalenka turned professional in 2015 and had little overlap in competition with Sharapova, who retired in 2020, or with Clijsters, who retired in 2012 except for a brief comeback between 2020 and 2022.
In particular, her prevalence within the top 3 is significant independently of her short career at the time of writing.
However, when the cutoff is relaxed to top 5, her short career-to-date puts her at a disadvantage compared to long-established players with significant career prevalence in the top 5, such as Clijsters and Sharapova.
Qualitatively, we observe the inclusion of more players near the top of this Hasse diagram compared to the previous cases, and that chains of players that were comparable as top 3 players become anti-chains and close to maximal as top 5 players (e.g., Lindsay Davenport and Seles).

In Figure~\ref{fig: wta-10} we plot the top portion of the Hasse diagram of the poset over ``top 10 players,''  where now the three of Graf, Navratilova, and S. Williams are incomparable maximal elements.
In other words, while Navratilova is close to but not quite the GOAT as a top 3 or top 5 player, she is conclusively one of the GOATs as a top 10 player.
As a top 10 player, both Graf and Navratilova are better than Seles.
However, S. Williams cannot be conclusively compared to Seles even though she fares better than her in terms of average rank.
\begin{figure}[ht]
    \centering
    \includegraphics[width=\linewidth]{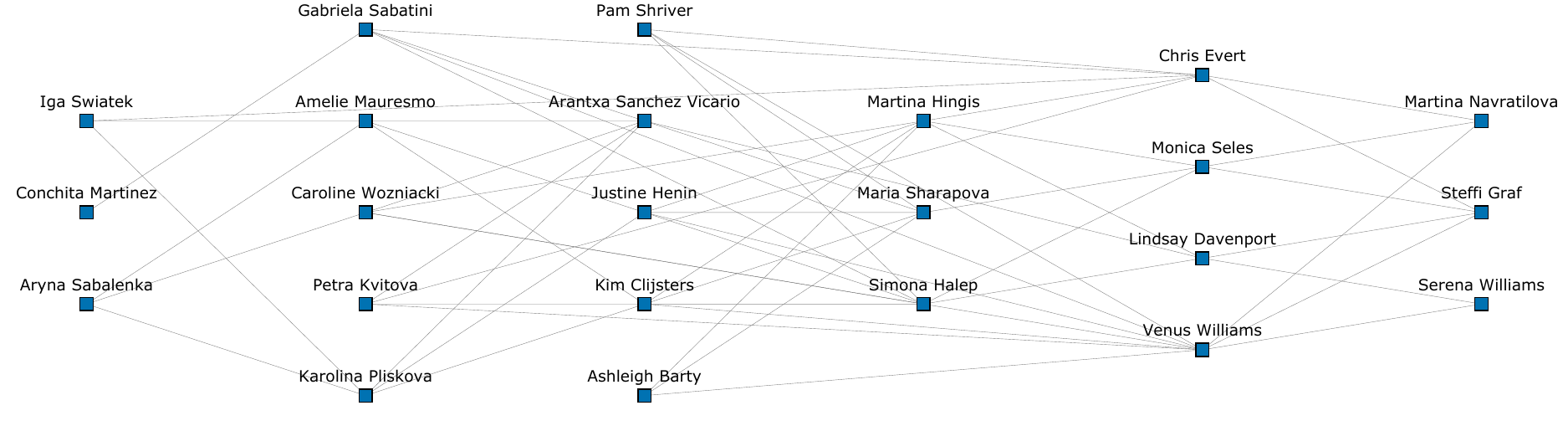}
    \includegraphics[width=\linewidth]{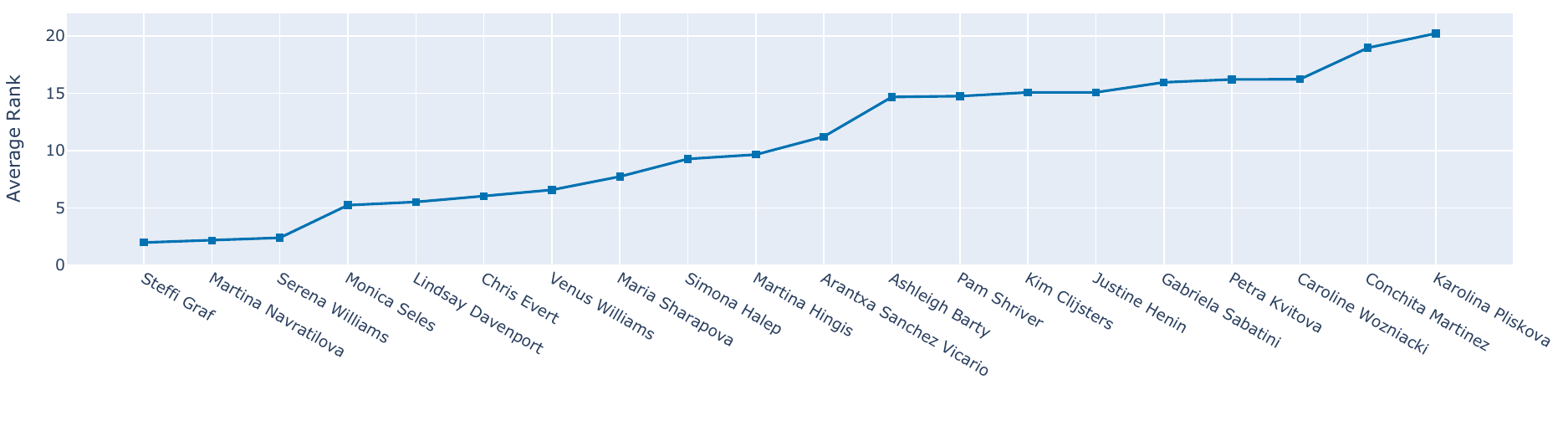}
    \caption{
        Cutoff at top $10$ of full WTA rankings.
    }
    \label{fig: wta-10}
\end{figure}

Lastly, in Figure~\ref{fig: wta-20} we plot the top portion of the Hasse diagram of the poset over ``top 20 players,''  where now all five of Graf, Navratilova, S. Williams, V. Williams, and Wozniacki are incomparable maximal elements.
Qualitatively, we observe a significant and across the board increment in the number of players near the top of the Hasse diagram.
This indicates that, in the WTA, there are many more good top 20 players than there are good top 3, 5, or 10 players.
\begin{figure}[ht]
    \centering
    \includegraphics[width=\linewidth]{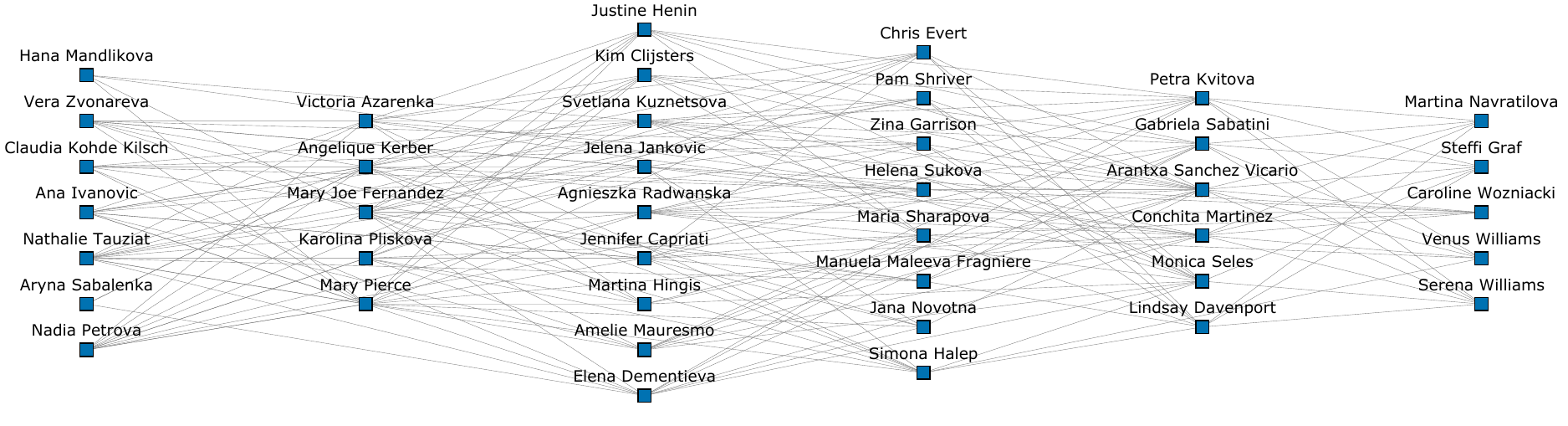}
    \includegraphics[width=\linewidth]{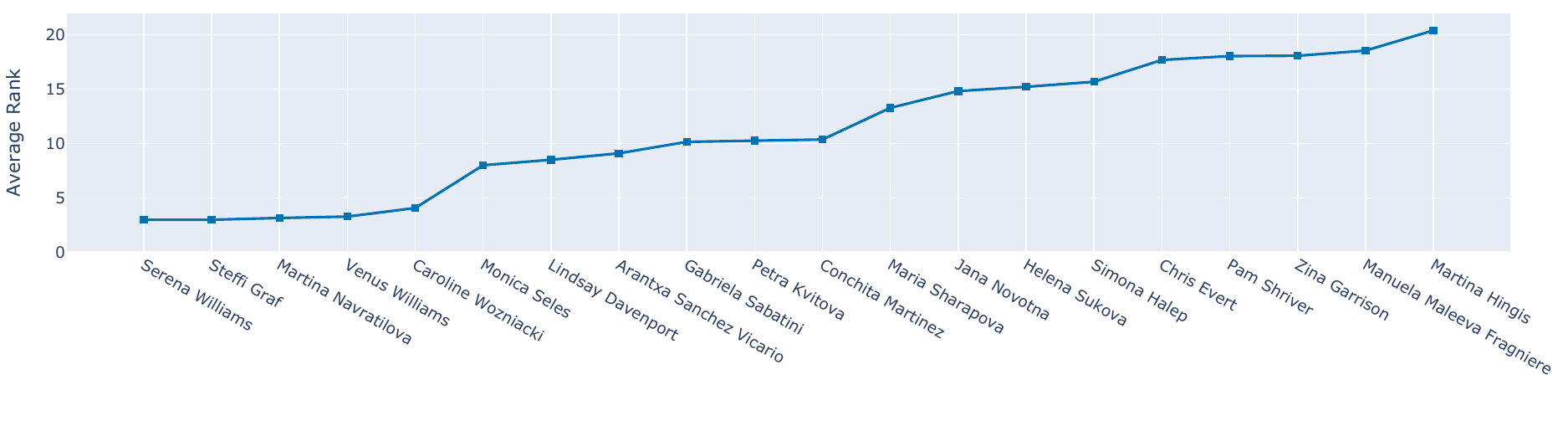}
    \caption{
        Cutoff at top $20$ of full WTA rankings.
    }
    \label{fig: wta-20}
\end{figure}

\subsection{Comparing the Associations}
\label{sec: comparing the associations}

Figures~\ref{fig: atp-20} and \ref{fig: wta-20} are qualitatively different in that, as top 20 players, the number of players close to maximal is much larger for the WTA than it is for the ATP.
Comparing the results from Sections~\ref{sec: rankings from the atp} and \ref{sec: rankings from the wta}, this difference is not nearly as apparent for the posets over top 3, 5, or 10 players. 
Here, we further investigate this difference from a more quantitative point of view.
In particular, we ask questions of the form: to what extent is being ``a good top $\kappa$ player'' a predictor of being ``a good top $\kappa'$ player,'' where $\kappa' \leq \kappa$?
We formalize the notion of ``a good top $\kappa$ player'' by average rank (from the linear extensions) for the poset cutoff $\kappa$.
Intuitively, the smaller the difference between $\kappa$ and $\kappa'$, the stronger we expect the correlation to be.

In Figure~\ref{fig: scatter}, we consider the set of players that appear among the first $50$ by average rank for all cutoffs $\kappa = 3, 5, 10, 20$.
This amounts to $35$ players in the WTA and $32$ the ATP.
Then, for each of these sets of players, we plot a matrix of scatter plots of average rank and varying cutoffs.
In this way, each point represents a player, and its coordinates correspond to the player's average rank for a combination of cutoffs.
For each scatter plot, we also fit a linear model and report its coefficient of determination $r^2$.
\begin{figure}[ht]
    \centering
    \begin{subfigure}{0.495\linewidth}
    \includegraphics[width=\linewidth]{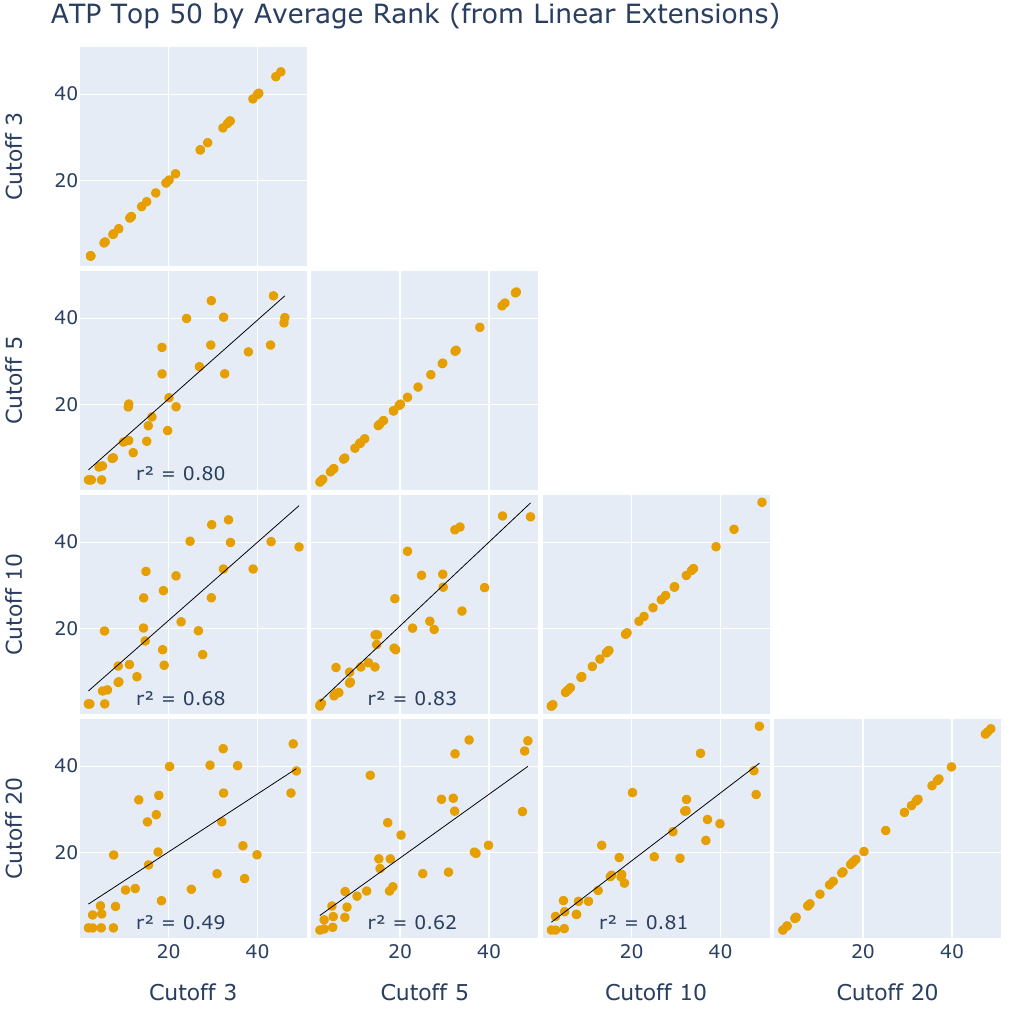}
    \subcaption{Rankings from the ATP.}
    \end{subfigure}
    \begin{subfigure}{0.495\linewidth}
    \includegraphics[width=\linewidth]{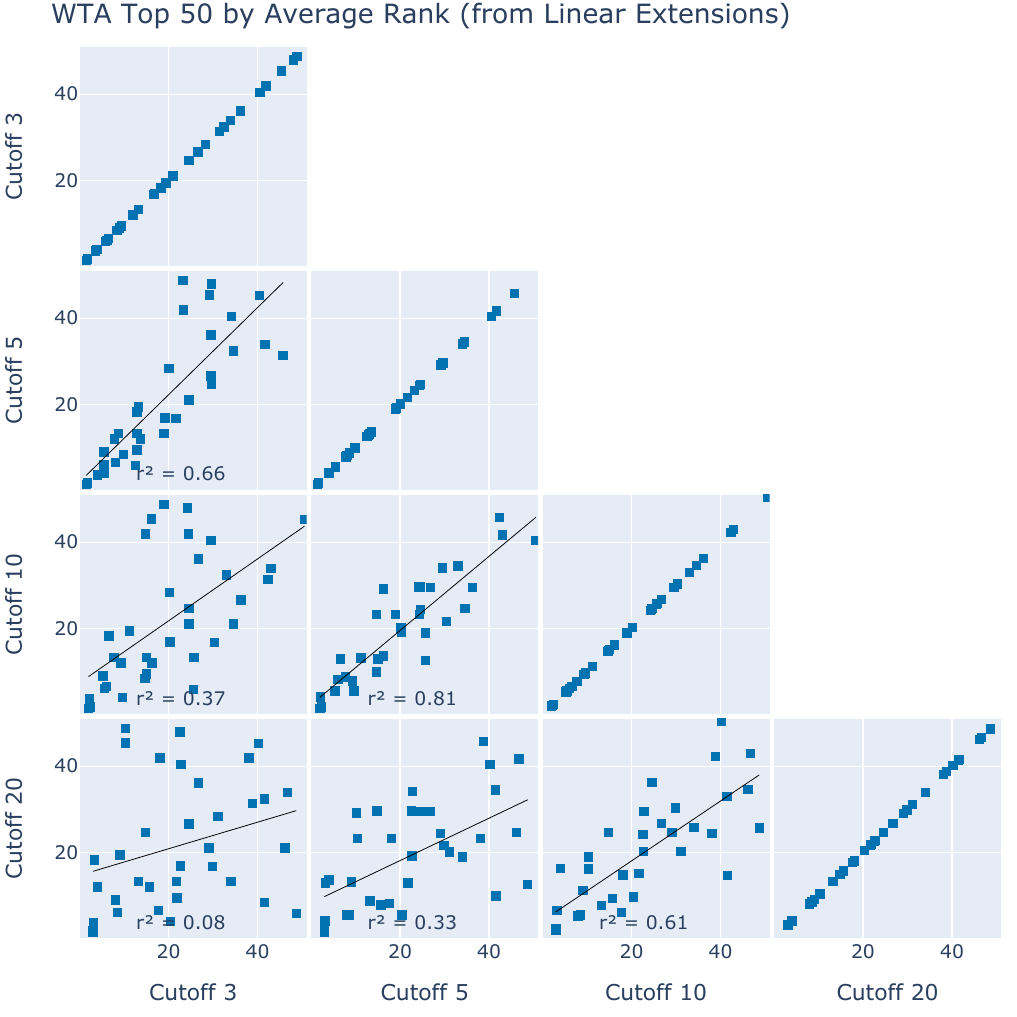}
    \subcaption{Rankings from the WTA.}
    \end{subfigure}
    \caption{Average ranks (from linear extensions) with varying cutoffs.}
    \label{fig: scatter}
\end{figure}

From Figure~\ref{fig: scatter} we note that, in both associations, being a good top 5 player is a reasonable predictor of being a good top 3 player, as is being a good top 10 player a reasonable predictor for being a good top 5 player.
In both cases, being a good top 10 player is still correlated with being a good top 3 player, although much less so than with being a good top 5 player.
However, when we consider being a good top 20 player as a predictor, we observe a major difference between the two associations.
In the ATP, being a good top 20 player is correlated with being a good top 10, 5, and 3 player (with decreasing correlation in that order).
Conversely, in the WTA, being a good top 20 player is only mildly correlated with being a good top 10 player, and it is nearly uncorrelated with being a good top 5 or top 3 player.
In general, all correlations are weaker in the WTA than they are in the ATP.
This quantitative finding highlights a major difference in terms of career and dominance longevity for top players across the two associations.
While we do not delve into causality in this work, we believe it would be practically important to further study the reasons and broader implications of this phenomenon.

\section{Conclusion}
\label{sec: conclusion}

In this work, we proposed a mathematical framework to compare players, and in particular players spanning across different time periods in sports history.
As illustrated in Figure~\ref{fig: summary}, we aggregated cutoffs of historical ranking data into a single poset over top players by \textit{i}) sampling from a data-driven random walk over the symmetric group and \textit{ii}) adopting a notion of stochastic dominance.
Ultimately, some pairs of players can be conclusively compared while others cannot.
This approach departs from existing methods in that it relies on ranking data (rather than match-level data, which may or may not exist/be available) and formally comes to terms with the possibility that some comparisons are ``too close to call.''
Our methods leave ample room for further modeling choices, namely by adapting specifics of the constructions \eqref{eq: P} and \eqref{eq: dominance} as needed for the particular application in mind.
In general, this works motivates the analytical study of data-driven sampling of permutations, be it with transpositions or other shuffling techniques.
We anticipate this to be both challenging and of independent mathematical interest.

We implemented our methods using data from both the ATP and the WTA to find the GOATs in the respective categories.
Our experimental results suggest that the ``Big Three'' of the ATP, that is Nadal, Federer, and Djokovic, are the ones that tend to come ahead as GOATs under most circumstances, whereas S. Williams and Graf are the ones that do so in the case of the WTA.
Through our experiments, we also noted major differences in terms of career and
dominance longevity for top players across the two
associations.

Finally we remark that our methods, while initially motivated by this application in sports analytics, can be applied more broadly to other areas of operations management. 
For example, consider the area of route planning logistics, wherein many problems involve the frequent re-optimization of routes that start and end at fixed depots. 
In these settings, historical routes (e.g., optimized on a daily basis) can be compiled into a collection, in much in the same way as the historical rankings in Figure~\ref{fig: historical}\textemdash after all, a route is a permutation over locations to be visited.
Therefore, methods similar to ours could produce a poset over locations, in this way ``learning'' (perhaps soft) precedence constraints from historical data.
The distance between locations in terms of edges in Hasse diagram could intuitively encode the level of confidence in these constraints.
Such information could be useful for the purpose of speeding-up integer programming solvers, through the addition of cuts that discard routes of poor quality.
There are other examples of the usefulness of inferring precedence constraints, for example those stemming from the 2021 Amazon Last Mile Routing Research Challenge~\cite{merchan20242021}.
Cook, Held, and Helsgaun~\cite{cook2024constrained} came first in the challenge by using historical last-mile delivery data to infer precedence constraints at a ``zone'' level (i.e., a higher level of aggregation than individual locations), in this way producing routes that closely-match those deemed of ``high quality'' by seasoned drivers who account for various practical aspects beyond merely route length.
We therefore believe an extension of the ideas presented in this work could fall in line with recent calls~~\cite{winkenbach2024introduction} for ``innovative solution methods'' in route planning.

\section*{Acknowledgements}
J. C. Mart\'inez Mori would like to thank Patrick Kastner for computational resources and Amanda Priestley for a helpful discussion.
Part of this research was performed while J. C. Mart\'inez Mori was visiting the Mathematical Sciences Research Institute (MSRI), now becoming the Simons Laufer Mathematical Sciences Institute (SLMath), which is supported by NSF Grant No. DMS-192893.
J. C. Mart\'inez Mori is supported by Schmidt
Science Fellows, in partnership with the Rhodes Trust.
J. C. Mart\'inez Mori is supported by the President's Postdoctoral Fellowship Program (PPFP) at the Georgia Institute of Technology.

\printbibliography

\end{document}